\def\doi{5 (1:1) 2009}
\begin{document}
  \title[%
  The Complexity of Generalized Satisfiability for LTL%
]{%
  The Complexity of Generalized Satisfiability \\ for Linear Temporal Logic%
}

\author[M.~Bauland]{Michael Bauland\rsuper a}
\address{{\lsuper a}Knipp GmbH, Martin-Schmei\ss er-Weg 9, 44227 Dortmund, Germany}
\email{Michael.Bauland@knipp.de}


\author[T.~Schneider]{Thomas Schneider\rsuper b}
\address{{\lsuper b}School of Computer Science, University of Manchester, Oxford Road, Manchester M13 9PL, UK}
\email{schneider@cs.man.ac.uk}

\author[H.~Schnoor]{Henning Schnoor\rsuper c}
\address{{\lsuper c}Inst.\ f\"ur Informatik, Christian-Albrechts-Universit\"at zu Kiel, 24098 Kiel, Germany}
\email{schnoor@ti.informatik.uni-kiel.de}
\thanks{Supported by the Postdoc Programme of the German Academic Exchange Service (DAAD)}

\author[I.~Schnoor]{Ilka Schnoor\rsuper d}
\address{{\lsuper d}Inst.\ f\"{u}r Theoretische Informatik, Universit\"{a}t zu L\"{u}beck, Ratzeburger Allee 160, 23538 L\"{u}beck, Germany}
\email{schnoor@tcs.uni-luebeck.de}

\author[H.~Vollmer]{Heribert Vollmer\rsuper e}
\address{{\lsuper e}Inst.\ f\"ur Theoretische Informatik, Universit\"{a}t Hannover, Appelstr. 4, 30167 Hannover, Germany}
\email{vollmer@thi.uni-hannover.de}
\thanks{Supported in part by DFG VO 630/6-1.}

\keywords{computational complexity, linear temporal logic, satisfiability}
\subjclass{F.4.1}%
\titlecomment{{\lsuper*}This article extends the conference contribution \cite{bss+07} with full proofs of all lemmata and theorems.}

\begin{abstract}
  In a seminal paper from 1985, Sistla and Clarke showed
  that satisfiability for Linear Temporal Logic (LTL) is either \NP-complete
  or \PSPACE-complete, depending on the set of temporal operators used%
  .
  If, in contrast, the set of propositional operators is restricted, the complexity may decrease.
  This paper undertakes a systematic study of satisfiability for LTL formulae over restricted sets
  of propositional and temporal operators. Since every propositional operator corresponds to a
  Boolean function, there exist infinitely many propositional operators. In order to systematically
  cover all possible sets of them, we use Post's lattice. With its help, we determine the
  computational complexity of LTL satisfiability for all combinations of temporal operators and
  all but two classes of propositional functions. Each of these infinitely many problems is shown
  to be either \PSPACE-complete, \NP-complete, or in \PTIME.
\end{abstract}

\maketitle
\vfill

\section{Introduction}

  \newlength{\abo}\setlength{\abo}{-3pt}
  \newlength{\abu}\setlength{\abu}{0pt}
  \newlength{\Abu}\setlength{\Abu}{4pt}

  \newcommand{\Stab}{\rule{0pt}{14pt}}
  \newcommand{\stab}{\rule{0pt}{12pt}}


%

  \noindent
  \emph{Linear Temporal Logic (LTL)}~ 
  was introduced by Pnueli in~\cite{pnu77} as a formalism
  for reasoning about the properties and the behaviors of parallel programs and concurrent systems,
  and has widely been used
  for these purposes. Because of the need to perform reasoning tasks---such as deciding
  satisfiability, validity, or truth in a structure generated by binary relations---in an
  automated manner, their decidability and computational complexity is an important issue.

  It is known that in the case of full LTL with the operators \F\ (eventually), \G\ (invariantly),
  \X\ (next-time), \U\ (until), and \S\ (since), satisfiability and determination of truth
  are \PSPACE-complete~\cite{sicl85}.
  Restricting the set of temporal operators leads to \NP-completeness in some cases~\cite{sicl85}.
  These results imply that reasoning with LTL is difficult in terms of computational complexity.

  This raises the question under which restrictions the complexity of these problems decreases.
  Contrary to classical modal logics, there does not seem to be a natural way to modify the semantics of LTL and obtain decision problems with lower complexity.
  However, there are several possible constraints that can be posed on the syntax.
  One possibility is to restrict the set of temporal operators, which has been done 
  exhaustively in~\cite{sicl85,mar04}.

  Another constraint is to allow only a certain ``degree of propositionality'' in the language,
  \ie, to restrict the set of allowed propositional operators. Every propositional operator
  represents a Boolean function---\eg, the operator $\wedge$ (\AND) corresponds to the
  binary function whose value is 1 if and only if both arguments have value 1. There are
  infinitely many Boolean functions and hence an infinite number of propositional operators.

  We will consider propositional restrictions in a systematic way, achieving a
  complete classification of the complexity of the reasoning problems for LTL.
  Not only will this reveal all cases in this framework where satisfiability is tractable. It will also
  provide a better insight into the sources of hardness by explicitly stating the combinations
  of temporal and propositional operators that lead to \NP- or \PSPACE-hard fragments.
  In addition, the ``sources of hardness'' will be identified whenever a proof technique
  is not transferable from an easy to a hard fragment.

  \par\medskip\noindent
  \emph{Related work.}~ The complexity of model-checking and
  satisfiability problems for several syntactic restrictions of LTL
  fragments has been determined in the literature: In
  \cite{sicl85,mar04}, temporal operators and the use of negation have
  been restricted; these fragments have been shown to be \NP- or
  \PSPACE-complete.  In \cite{ds02}, temporal operators, their
  nesting, and the number of atomic propositions have been restricted;
  these fragments have been shown to be tractable or \NP-complete.
  Furthermore, due to \cite{CL93,DFR00}, the restriction to Horn
  formulae does not decrease the complexity of satisfiability for LTL.
  As for related logics, the complexity of satisfiability has been
  shown in \cite{ees90} to be tractable or \NP-complete for three
  fragments of CTL (computation tree logic) with temporal and
  propositional restrictions. In \cite{hal95}, satisfiability for
  multimodal logics has been investigated systematically, bounding
  the depth of modal operators and the number of atomic
  propositions. In \cite{hem01}, it was shown that satisfiability
  for modal logic over linear frames drops from \NP-complete to
  tractable if propositional operators are restricted to conjunction
  and atomic negation.

  The effect of propositional restrictions on the complexity of the satisfiability problem
  was first considered \emph{systematically} 
  by Lewis for the case of classical propositional logic in~\cite{lew79}.
  He established a dichotomy---depending on the set of propositional operators,
  satisfiability is either \NP-complete or decidable in polynomial time. In the case of modal
  propositional logic, a trichotomy has been achieved in~\cite{bhss06}: modal satisfiability
  is \PSPACE-complete, \CONP-complete, or in \PTIME. That complete classification in terms of
  restrictions on the propositional operators follows the structure of Post's
  lattice of closed sets of Boolean functions~\cite{pos41}.

  \par\medskip\noindent
  \emph{Our contribution.}~
  This paper analyzes the same systematic propositional restrictions for LTL, 
  and combines them
  with restrictions on the temporal operators. Using Post's lattice,
  we examine the satisfiability problem for every possible fragment of
  LTL determined by an arbitrary set of propositional operators
  \textit{and} any subset of the five temporal operators listed
  above. We determine the computational complexity of these problems,
  except for one case---where only propositional operators
  based on the binary \XOR\ function (and, perhaps, constants) are
  allowed. We show that all remaining cases are either
  \PSPACE-complete, \NP-complete, or in \PTIME.

  It is not the aim of this paper to focus on particular propositional
  restrictions that are motivated by certain applications. We prefer
  to give a classification as complete as possible which allows to
  choose a fragment that is appropriate, in terms of expressivity and
  tractability, for any given application.
  Applications of syntactically restricted fragments of temporal
  logics can be found, for example, in the study of cryptographic
  protocols: In \cite{low08}, Gavin Lowe restricts the application of
  negation and temporal operators to obtain practical verification
  algorithms.

  Among our results, we exhibit cases with non-trivial tractability as well as
  the smallest possible sets of propositional and temporal operators that already lead
  to \NP-completeness or \PSPACE-completeness, respectively. Examples for the first
  group are cases in which only the unary \NOT\ function, or only monotone functions are
  allowed, but there is no restriction on the temporal operators. As for the second group,
  if only the binary function $f$ with $f(x,y) = (x \wedge \overline{y})$ is permitted,
  then satisfiability is \NP-complete already in the case of propositional
  logic~\cite{lew79}. Our results show that the presence of the same function $f$
  separates the tractable languages from the \NP-complete and \PSPACE-complete ones,
  depending on the set of temporal operators used. According to this, minimal sets of temporal operators
  leading to \PSPACE-completeness together with $f$ are, for example, $\{\U\}$ and
  $\{\F,\X\}$.

  The technically most involved proof is that of \PSPACE-hardness for the language
  with only the temporal operator \S\ and the boolean operator $f$
  (Theorem~\ref{theorem:SAT(S;BF) SAT(S|U;S1) PSPACE-h}). The difficulty lies in
  simulating the quantifier tree of a Quantified Boolean Formula (QBF) in a linear structure.

  Our results are summarized in Table~\ref{tab:results_sat}. The first column contains
  the sets of propositional operators, with the terminology taken from Definition \ref{def:BFs}.
  The second column shows the classification of
  classical propositional logic as known from~\cite{lew79} and~\cite{coo71a}.
  The last line in column 3 and 4 is largely due to~\cite{sicl85}. All other
  entries are the main results of this paper.
  The only open case appears in the third line and is discussed in the
  Conclusion. Note that the case distinction
  also covers all clones which are not mentioned in the present paper.
  \begin{table}
    \centering
    \begin{small}
      \begin{tabular}{l|c|cc}
        \stab \hspace*{\fill} set of temporal operators & $\emptyset$ & $\{\F\}$, $\{\G\}$,   & any other         \\[2pt]
        set of propositional operators                  &             & $\{\F,\G\}$, $\{\X\}$ & combination       \\
        \hline
        \Stab all operators 1-reproducing or self-dual  & trivial     & trivial               & trivial           \\
        \stab only negation or all operators monotone   & in \PTIME   & in \PTIME             & in \PTIME         \\
        \stab all operators linear                      & in \PTIME   & ?                     & ?                 \\
        \stab $x \wedge \neg y$ is expressible          & \NP-c.\     & \NP-c.\               & \PSPACE-c.\       \\
        \stab all Boolean functions                     & \NP-c.\     & \NP-c.\               & \PSPACE-c.\
      \end{tabular}

    \end{small}
    \par\bigskip
    \caption{%
      Complexity results for satisfiability. The entries ``trivial'' denote cases in which a given formula
      is always satisfiable. The abbreviation ``c.'' stands for ``complete.'' Question marks stand for open questions.%
    }
    \label{tab:results_sat}
  \end{table}

\section{Preliminaries}

  A \emph{Boolean function} or \emph{Boolean operator} is a function
  $f:\{0,1\}^n\rightarrow\{0,1\}$. We can identify an $n$-ary
  propositional connector $c$ with the $n$-ary Boolean operator $f$
  defined by: $f(a_1,\dots,a_n)=1$ if and only if the formula
  $c(x_1,\dots,x_n)$ becomes true when assigning $a_i$ to $x_i$ for
  all $1\leq i\leq n$. Additionally to propositional connectors we use
  the unary temporal operators $\X$ (next-time), $\F$ (eventually),
  $\G$ (invariantly) and the binary temporal operators $\U$ (until),
  and $\S$ (since).

  Let $B$ be a finite set of Boolean functions and $M$ be a set of
  temporal operators. A \emph{temporal $B$-formula over $M$} is a
  formula $\varphi$ that is built from variables, propositional
  connectors from $B$, and temporal operators from $M$.  More
  formally, a temporal $B$-formula over $M$ is either a propositional
  variable or of the form $f(\varphi_1,\dots,\varphi_n)$ or
  $g(\varphi_1,\dots,\varphi_m)$, where $\varphi_i$ are temporal
  $B$-formulae over $M$, $f$ is an $n$-ary propositional operator from
  $B$ and $g$ is an $m$-ary temporal operator from $M$. In
  \cite{sicl85}, complexity results for formulae using the temporal
  operators \F, \G, \X\ (unary), and \U, \S\ (binary) were
  presented. We extend these results to temporal $B$-formulae over
  subsets of those temporal operators.  The set of variables appearing
  in $\varphi$ is denoted by $V_{\varphi}.$ If $M=\{\X,\F,\G,\U,\S\}$
  we call $\varphi$ a \emph{temporal $B$-formula}, and if
  $M=\emptyset$ we call $\varphi$ a \emph{propositional $B$-formula}
  or simply a \emph{$B$-formula}. The set of all temporal $B$-formulae
  over $M$ is denoted by $\LL{M,B}.$

  A model in linear temporal logic is a linear structure of states,
  which intuitively can be seen as different points of time, with
  propositional assignments. Formally a \emph{structure} $S=(s,V,\xi)$
  consists of an infinite sequence $s=(s_i)_{i\in\mathbb{N}}$ of
  distinct states, a set of variables $V$, and a function
  $\xi:\{s_i\mid i\in\mathbb{N}\}\rightarrow 2^V$ which induces a
  propositional assignment of $V$ for each state. 
  be a structure and $\varphi$ a temporal $\{\wedge,\neg\}$-formula
  over $\{\X,\U,\S\}$ with variables from $V$. We define what it means
  that \emph{$S$ satisfies $\varphi$ in $s_i$} ($S,s_i\vDash\varphi$):
  For a temporal $\{\wedge,\neg\}$-formula over $\{\X,\U,\S\}$ with
  variables from $V$ we define what it means that \emph{$S$ satisfies
  $\varphi$ in $s_i$} ($S,s_i\vDash\varphi$): let $\varphi_1$ and
  $\varphi_2$ be temporal $\{\wedge,\neg\}$-formulae over
  $\{\X,\U,\S\}$ and $x\in V$ a variable.

  \begin{center}
    \begin{tabular}{lll}
      $S,s_i\vDash x$                        & if and only if & $x\in\xi(s_i)$,                                         \\
      $S,s_i\vDash \varphi_1\wedge\varphi_2$ & if and only if & $S,s_i\vDash \varphi_1$ and $S,s_i\vDash \varphi_2$,    \\
      $S,s_i\vDash \neg\varphi_1$            & if and only if & $S,s_i\nvDash\varphi_1$,                                \\
      $S,s_i\vDash \X\varphi_1$              & if and only if & $S,s_{i+1}\vDash\varphi_1$,                             \\
      $S,s_i\vDash \varphi_1\U\varphi_2$     & if and only if & there is a $k\geq i$ such that $S,s_k\vDash \varphi_2$, \\
                                             &                & and for every $i\leq j<k$,~ $S,s_j\vDash\varphi_1$,     \\
      $S,s_i\vDash \varphi_1\S\varphi_2$     & if and only if & there is a $k\leq i$ such that $S,s_k\vDash \varphi_2$, \\
                                             &                & and for every $k<j\leq i$,~ $S,s_j\vDash\varphi_1$.
    \end{tabular}
  \end{center}

  The remaining temporal operators are interpreted as abbreviations:
  $\F\varphi=\mathop{true}\U\varphi$ and
  $\G\varphi=\neg\F\neg\varphi.$ Therefore and since every Boolean
  operator can be composed from $\wedge$ and $\neg$, the above
  definition generalizes to temporal $B$-formulae for arbitrary sets
  $B$ of Boolean operators.

  A temporal $B$-formula $\varphi$ over $M$ is \emph{satisfiable} if
  there exists a structure $S$ such that $S,s_i\vDash\varphi$ for some
  state $s_i$ from $S$. Furthermore, $\varphi$ is called \emph{valid}
  if, for all structures $S$ and all states $s_i$ from $S$, it holds
  that $S,s_i\vDash\varphi$. We will consider the following problems:
  Let $B$ be a finite set of Boolean functions and $M$ a set of
  temporal operators. Then \tsat{M,B} is the problem to decide whether
  a given temporal $B$-formula over $M$ is satisfiable.
  In the literature, another notion of satisfiability is sometimes
  considered, where we ask if a formula can be satisfied at the first
  state in a structure. It is easy to see that, in terms of
  computational complexity, this does not make a difference for our
  problems as long as the considered fragment does not contain the
  temporal operator $\S$. For this paper, we only study the
  satisfiability problem as defined above.

Sistla and Clarke analyzed the satisfiability problem for temporal
$\{\wedge,\vee,\neg\}$-formulae over some sets of temporal operators,
see Theorem \ref{theorem:sicl85}. Note that, due to de Morgan's laws,
there is no significant difference between the sets
$\{\wedge,\vee,\neg\}$ and $\{\wedge,\neg\}$ of Boolean operators. For
convenience, we will therefore prefer the former denotation to the latter
when stating results. Furthermore, the original proof of Theorem
\ref{theorem:sicl85} explicitly uses the operator $\vee$.

  \begin{theorem}[\cite{sicl85}]\label{theorem:sicl85}
    ~\par
    \begin{enumerate}[\em(1)]
      \item\label{theorem:sicl85:SAT(F;BF) NP-c}
            $\tsat{\{\F\},\{\wedge,\vee,\neg\}}$ is \NP-complete.
      \item\label{theorem:sicl85:SAT(F,X|U|U,S,X;BF) PSPACE-c}
            $\tsat{\{\F,\X\},\{\wedge,\vee,\neg\}}$,
            $\tsat{\{\U\},\{\wedge,\vee,\neg\}}$, and
            $\tsat{\{\U,\S,\X\},\{\wedge,\vee,\neg\}}$ are \PSPACE-complete.
    \end{enumerate}
  \end{theorem}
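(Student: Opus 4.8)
This theorem is attributed to Sistla and Clarke~\cite{sicl85}, so the plan is to recall the structure of their argument rather than to reinvent it. I will treat the two items separately.

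For item~(1), the plan is to show membership in \NP\ and \NP-hardness. For hardness, I would reduce from \textsc{3-Sat}: given a propositional formula, view it directly as a temporal $\{\wedge,\vee,\neg\}$-formula over $\{\F\}$ in which \F\ does not occur, so satisfiability of the temporal formula at some state coincides with propositional satisfiability. For membership in \NP, the key observation is that for the fragment with only \F\ (and \G, obtained as $\neg\F\neg$), satisfiability has a \emph{small model property}: if $\varphi$ is satisfiable, it is satisfiable in a structure that, after polynomially many distinct states, becomes constant (loops on a single state). One proves this by a selection/contraction argument: from any model, choose a bounded set of ``witness'' states --- one for each \F-subformula that needs a witness, plus one state where all remaining \G-requirements are simultaneously met --- and collapse the rest. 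Then an \NP\ algorithm guesses such a polynomial-size structure together with, for each relevant subformula and each state, whether it holds there, and verifies consistency in polynomial time.

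For item~(2), membership in \PSPACE\ for all three fragments follows from the general \PSPACE\ upper bound for full LTL satisfiability~\cite{sicl85} (via, e.g., the standard tableau or automata-theoretic construction, which runs in polynomial space). So the work is three \PSPACE-hardness proofs, and I would reduce from \textsc{Qbf} (validity of quantified Boolean formulae) in each case. The common idea: encode the $2^n$ truth assignments to the quantified variables as a linear sequence of blocks of states, use the temporal operators to force the block structure and to propagate the quantifier alternation, and use the propositional operators $\wedge,\vee,\neg$ freely to evaluate the matrix. For $\{\F,\X\}$ one uses \X\ to build fixed-length blocks encoding a variable's value and \F\ to reach the next block; for $\{\U\}$ one simulates \X\ and the block delimiters using \U\ with suitable marker propositions; the $\{\U,\S,\X\}$ case then follows a fortiori (it contains $\{\U,\X\}$, and adding \S\ only adds power).

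The main obstacle is the \PSPACE-hardness reductions in item~(2): getting the encoding of the QBF quantifier tree into a \emph{linear} structure correct --- in particular, ensuring that the universal quantifiers genuinely force both values of a variable to be considered, which in a linear model requires carefully laying out both branches consecutively and using the temporal operators to enforce that the matrix evaluates correctly on every relevant sub-block. Since this is exactly the difficulty the introduction flags as delicate (and which the present paper revisits for the \S-only fragment), I would lean on the construction from~\cite{sicl85} for the details and only reproduce the reduction schema and the correctness invariants.
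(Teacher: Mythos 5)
The paper offers no proof of this theorem at all---it is imported verbatim from Sistla and Clarke \cite{sicl85}---and your proposal correctly treats it that way while accurately reconstructing the shape of their argument: hardness inherited from propositional logic plus a small-model/contraction argument for \NP-membership of the \F-fragment, and the generic \PSPACE\ upper bound plus encodings of a \PSPACE-complete problem into block-structured linear models for the hardness claims in item~(2). This matches the paper's treatment, so there is nothing to correct.
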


  Since there are infinitely many finite sets of Boolean functions, we
  introduce some algebraic tools to classify the complexity of the
  infinitely many arising satisfiability problems. We denote with
  \proj{n}{k} the $n$-ary projection to the $k$-th variable, \ie,
  $\proj{n}{k}(x_1,\dots,x_n)=x_k$, and with \const{n}{a} the $n$-ary
  constant function defined by $\const{n}{a}(x_1,\dots,x_n)=a$. For
  $\const{1}{1}(x)$ and $\const{1}{0}(x)$ we simply write 1 and 0. A
  set $C$ of Boolean functions is called a \emph{clone} if it is
  closed under superposition, which means $C$ contains all projections
  and $C$ is closed under arbitrary composition \cite{pip97b}. For a
  set $B$ of Boolean functions we denote with $\clone{B}$ the smallest
  clone containing $B$ and call $B$ a \emph{base} for $\clone{B}$. In
  \cite{pos41} Post classified the lattice of all clones 
  Figure~\ref{Lattice}) and found a finite base for each clone.

  We now define some properties of Boolean functions, where $\oplus$ denotes the binary exclusive or. 
  \begin{definition}
  \label{def:BFs}
    Let $f$ be an $n$-ary Boolean function.
    \begin{enumerate}[$\bullet$]
      \item $f$ is 1\emph{-reproducing} if $f(1,\dots,1)=1$.
      \item $f$ is \emph{monotone} if $a_1\leq b_1,\dots,a_n\leq b_n$ implies $f(a_1,\dots,a_n)\leq f(b_1,\dots,b_n)$.
      \item $f$ is 1\emph{-separating} if there exists an $i\in\{1,\dots,n\}$ such that $f(a_1,\dots,a_n)=1$ implies $a_i=1$.
      \item $f$ is \emph{self-dual} if $f\equiv\dual(f)$, where $\dual(f)(x_1,\dots,x_n)= \neg f(\neg x_1,\dots,\neg x_n)$.
      \item $f$ is \emph{linear} if $f\equiv x_1\oplus\dots\oplus x_n\oplus c$ for a constant $c\in\set{0,1}$ and variables $x_1,\dots,x_n$.
    \end{enumerate}
  \end{definition}

  \begin{figure}
    \begin{center}
    \ifpdf
      \includegraphics{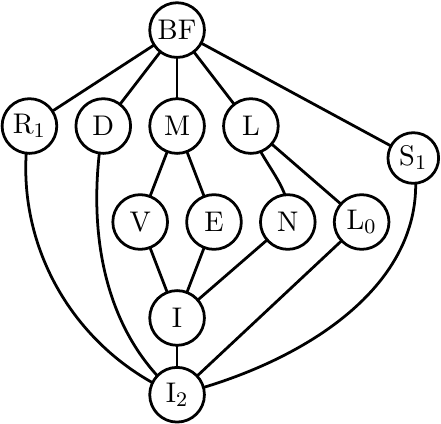}
    \else
      \includegraphics{Graphics/clones_restricted.1}
    \fi
    \caption{Graph of some closed classes of Boolean functions}
    \label{Lattice}
    \end{center}
  \end{figure}
  In Table~\ref{baselist} we define those clones that are essential for this paper plus four basic ones,
  and give Post's bases \cite{pos41} for them. The inclusions between them are given in Figure~\ref{Lattice}.
  The definitions of all clones as well as the full inclusion graph can be found, for example,
  in~\cite{bcrv03}.

%

  \def\Hline{\noalign{\hrule height.8pt}}
  \begin{table}
    \begin{scriptsize}
    \begin{center}
    \begin{tabular}{lll}
    \Hline Name & Definition & Base \\ 
    \Hline $\cBF$ & All Boolean functions & $\{\vee,\wedge,\neg\}$ \\ 
    \hline $\cR_1$ & $\{f \in \mathtext{BF} \mid f$ is $1$-reproducing $\}$ & $\{\vee,\ifff\}$ \\ 
    \hline $\cM$ & $\{f \in \mathtext{BF} \mid f$ is monotone $\}$ & $\{\vee,\wedge,0,1\}$ \\ 
    \hline $\cS_1$ & $\{f \in \mathtext{BF} \mid f$ is $1$-separating $\}$ & $\{x \wedge \overline{y}\}$ \\ 
    \hline $\cD$ & $\{f \mid f \text{ is self-dual}\}$ & $\{x\overline{y} \vee x\overline{z} \vee (\overline{y} \wedge \overline{z})\}$ \\ 
    \hline $\cL$ & $\{f \mid f\text{ is linear}\}$ & $\{\oplus,1\}$  \\ 
    \hline $\cL_0$ & $\clone{\{\oplus\}}$ & $\{\oplus\}$  \\
    \hline $\cV$ & $\{f \mid \text{There is a formula of the form } c_0 \vee c_1x_1 \vee \dots \vee c_nx_n$ & $\{\vee,1,0\}$  \\  &
    $\text{ such that } c_i \text{ are constants for } 1\leq i \leq n \text{ that describes } f\}$ & \\
    \hline $\cE$ & $\{f \mid \text{There is a formula of the form } c_0 \wedge(c_1\vee x_1)\wedge\dots\wedge(c_n\vee x_n)$ &
    $\{\wedge,1,0\}$ \\ & $\text{ such that } c_i \text{ are constants for } 1\leq i \leq n \text{ that describes } f\}$  &  \\
   \hline $\cN$ & $\{f \mid f\text{ depends on at most one variable}\}$  & $\{\neg,1,0\}$ \\
    \hline $\cI$ & $\{f \mid f\text{ is a projection or constant}\}$ & $\{0,1\}$ \\ 
    \hline $\cI_2$ & $\{f \mid f\text{ is a projection}\}$  & $\,\emptyset$ \\ 
    \Hline 
    \end{tabular}

    \end{center}

    \medskip
    \end{scriptsize}

    \caption{List of some closed classes of Boolean functions with bases}
    \label{baselist}
  \end{table}

  There is a strong connection between propositional formulae and Post's lattice.
  If we interpret propositional formulae as Boolean functions, it is
  obvious that $[B]$ includes exactly those functions that can be
  represented by $B$-formulae. This connection has been used various
  times to classify the complexity of problems related to
  propositional formulae: For example, Lewis presented a dichotomy for
  the satisfiability problem for propositional $B$-formulae:
  $\tsat{\emptyset,B}$ is \NP-complete if $\cS_1 \subseteq\clone{B}$,
  and solvable in \PTIME\ otherwise \cite{lew79}.

  Post's lattice was applied for the equivalence problem \cite{rei01},
  counting \cite{rewa99-dt} and finding minimal \cite{revo03}
  solutions, and learnability \cite{dal00} for Boolean formulae. The
  technique has been used in non-classical logic as well: Bauland et
  al. achieved a trichotomy in the context of modal logic, which says
  that the satisfiability problem for modal formulae is, depending on
  the allowed propositional connectives, \PSPACE-complete,
  \CONP-complete, or solvable in \PTIME\ \cite{bhss06}. For the
  inference problem for propositional circumscription, Nordh presented
  another trichotomy theorem \cite{nor05}.
 
  An important tool in restricting the length of the resulting formula
  in many of our reductions is the following lemma.  It shows that for
  certain sets $B$, there are always short formulae representing the
  functions \AND, \OR, or \NOT, respectively.  Point (2) and (3)
  follow directly from the proofs in \cite{lew79}, point (1) is
  Lemma~3.3 from \cite{sch05}.
\eject\vfill

  \begin{lemma}\label{lemma:short formulae}\hfill
    \begin{enumerate}[\em(1)]
    \item
      Let $B$ be a finite set of Boolean functions such that $V\subseteq\clone{B}\subseteq\mathtext{M}$
      ($E\subseteq\clone{B}\subseteq\mathtext{M}$, resp.). Then there exists a $B$-formula $f(x,y)$ such that $f$ represents
      $x\vee y$ ($x\wedge y$, resp.) and each of the variables $x$ and $y$ occurs exactly once in $f(x,y)$.
    \item
      Let $B$ be a finite set of Boolean functions such that $\clone B=\rm{BF}$. Then there are $B$-formulae $f(x,y)$ and
      $g(x,y)$ such that $f$ represents $x\vee y$, $g$ represents $x\wedge y$, and both variables occur in each of these
      formulae exactly once.
    \item
      Let $B$ be a finite set of Boolean functions such that $\mathtext{N}\subseteq\clone{B}$. Then there is a $B$-formula
      $f(x)$ such that $f$ represents $\neg x$ and the variable $x$ occurs in $f$ only once.
    \end{enumerate}
  \end{lemma}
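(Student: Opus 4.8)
The plan is to prove all three parts by a common ``frugal restriction'' technique. Given an arbitrary $B$-formula representing the target function, first replace every occurrence of every variable by a fresh variable, obtaining a function $\psi\in\clone{B}$ whose \emph{diagonal} realises the target. Then fix all but one occurrence of each relevant variable to suitably chosen constants, so that the induced restriction is again the target function; since in each of the three cases the constants $0$ and $1$ lie in $\clone{B}$ (they occur in the bases of $V$, $E$, $\mathtext{N}$, and trivially in $\rm{BF}$), these constants can be realised by $B$-formulae not containing $x$ or $y$, and substituting them back in gives a $B$-formula in which $x$ and $y$ (resp.\ $x$) occur exactly once.

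\textbf{Part (3).} From $\mathtext{N}\subseteq\clone{B}$ we get $\neg,0,1\in\clone{B}$. Blow up a $B$-formula for $\neg$ into $\psi(x_1,\dots,x_k)\in\clone{B}$ with $\psi(a,\dots,a)=\neg a$, so $\psi(0,\dots,0)=1$ and $\psi(1,\dots,1)=0$. Along the Boolean-cube path $v_0=0^k, v_1,\dots, v_k=1^k$ in which $v_j$ has its first $j$ coordinates set to $1$, the $\{0,1\}$-valued sequence $\psi(v_0),\dots,\psi(v_k)$ drops from $1$ to $0$, so $\psi(v_j)=1$ and $\psi(v_{j+1})=0$ for some $j$. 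Since $v_j$ and $v_{j+1}$ differ only in coordinate $j+1$, fixing the first $j$ coordinates to $1$, the last $k-j-1$ coordinates to $0$, and coordinate $j+1$ to $x$ yields a restriction computing $\neg x$; replacing the constant slots by $x$-free $B$-formulae gives the frugal formula. This uses nothing beyond $\mathtext{N}\subseteq\clone{B}$, as in \cite{lew79}.

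\textbf{Part (1).} Now every function in $\clone{B}$ is monotone, and $\vee,0,1\in\clone{B}$ (the $\wedge$-case is dual, from $E\subseteq\clone{B}$). Blow up a $B$-formula for $\vee$ into $\psi(x_1,\dots,x_a,y_1,\dots,y_b)\in\clone{B}$ with $\psi(u^a,v^b)=u\vee v$. Walking the $x$-coordinates from $0^a$ to $1^a$ with all $y$-coordinates at $0$ (using $\psi(0^a0^b)=0$ and $\psi(1^a0^b)=1$) produces a point $r$ with all $y$-coordinates $0$ and an index $i^*$ such that $\psi(r)=0$ and $\psi(r[i^*{\mapsto}1])=1$. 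Then walking the $y$-coordinates of $r$ up to the all-$y$-ones point (whose $\psi$-value is $1$ by monotonicity, being $\geq\psi(0^a1^b)=1$) produces a point $p_0\geq r$ and an index $j^*$ with $\psi(p_0)=0$ and $\psi(p_0[j^*{\mapsto}1])=1$; by monotonicity $\psi(p_0[i^*{\mapsto}1])\geq\psi(r[i^*{\mapsto}1])=1$, whence also $\psi(p_0[i^*{\mapsto}1,j^*{\mapsto}1])=1$. So fixing all coordinates as in $p_0$ except $i^*{\mapsto}x$ and $j^*{\mapsto}y$ induces exactly $x\vee y$; substituting the constants by $x,y$-free $B$-formulae finishes the argument (cf.\ \cite[Lemma~3.3]{sch05}).

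\textbf{Part (2).} Here $\clone{B}=\rm{BF}$, so $0,1,\neg\in\clone{B}$, and since $\mathtext{M}$ and $L$ are clones, $B$ contains a non-monotone connector and a non-affine connector. The non-monotone connector, restricted along a single cube edge on which it violates monotonicity exactly as in Part (3), gives a frugal $B$-formula for $\neg$. The non-affine connector, by fixing some of its arguments to constants, can be restricted to a binary non-affine function $\rho(x,y)$; the eight non-affine binary functions are precisely $\wedge$ and $\vee$ up to negating the two inputs and/or the output, so composing $\rho$ with the frugal negations on the appropriate inputs and output yields frugal $B$-formulae for both $x\wedge y$ and $x\vee y$. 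I expect the restriction of the non-affine connector to a binary function to be the main obstacle: since $\clone{B}$ is no longer monotone, the cube-walk of Parts (1) and (3) does not apply, and one must argue from the structure of non-affine functions (identifying auxiliary arguments among themselves if necessary, which is harmless as they are neither $x$ nor $y$) that such a restriction keeping $x$ and $y$ to a single occurrence always exists, following the case analysis of \cite{lew79}.
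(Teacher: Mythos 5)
Your proof is correct, but it is worth noting that the paper does not actually prove this lemma: it is stated with pointers to the literature (parts (2) and (3) to Lewis's paper, part (1) to Lemma~3.3 of the cited work of Schnoor), so what you have written is a self-contained reconstruction of exactly the ``frugal restriction'' arguments used in those sources rather than an alternative to anything in the text. Your parts (1) and (3) are complete as written: the blow-up into fresh variables, the walk along a monotone path of the cube to locate a single flipping edge, and the substitution of the constants $0,1\in[B]$ (available in all four cases, since they lie in $V$, $E$, $N$, and $\mathrm{BF}$) by $B$-formulae free of $x$ and $y$ is precisely the standard technique, and your monotonicity bookkeeping in part (1) ($\psi(p_0[i^*\mapsto 1])\geq\psi(r[i^*\mapsto 1])=1$ and hence $\rho(1,1)=1$) closes the argument. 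The one step you flag as a possible obstacle in part (2) --- that a non-affine connector admits a restriction by constants to a non-affine binary function --- does hold and needs no case analysis over $B$: if $g$ has a degree-$\geq 2$ monomial in its $\mathrm{GF}(2)$ polynomial, pick $i,j$ in such a monomial; the second derivative $g_{00}\oplus g_{01}\oplus g_{10}\oplus g_{11}$, viewed as a function of the remaining arguments, is then a nonzero polynomial, so some constant assignment makes it $1$, and at that assignment the induced binary function has odd weight, i.e.\ is one of the eight $\wedge$/$\vee$ variants you list. Composing with the frugal negation (obtained either from a non-monotone connector of $B$ or simply by applying your part (3), since $N\subseteq\mathrm{BF}=[B]$) on the appropriate inputs and output then yields frugal formulae for both $x\wedge y$ and $x\vee y$, each variable still occurring exactly once. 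The only cosmetic caveat is that your $f(x,y)$ may contain auxiliary dummy variables inside the constant-representing subformulae; this does not conflict with the lemma as stated, which only constrains the number of occurrences of $x$ and $y$.
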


\section{Results}

Our proofs for most of the upper complexity bounds will rely on similar
ideas as the ones in \cite{bhss06}, which are extensions of the proof
techniques for the polynomial time results in \cite{lew79}. However,
the proof of our polynomial time result for formulae using the
exclusive or (Theorem \ref{theorem:SAT(X;L) in P}) will be unrelated to the
positive cases for XOR in the mentioned papers.

The proofs for hardness results will use different techniques. Hardness
proofs for unimodal logics usually work in embedding a tree-like
structure directly into a tree-like model for modal
formulae. Naturally, this approach does not work with LTL which speaks
about linear models.  Hence, in the proof of Theorem
\ref{theorem:SAT(S;BF) SAT(S|U;S1) PSPACE-h}, we will encode a tree-like
structure into a linear one, and most of the complexity of the proof
will come from the need to enforce a tree-like behavior of linear models.

  \subsection{Hard cases}

  The following lemma gives our general upper bounds for various combinations of temporal operators.
  It establishes that the known upper complexity bounds for the case
  where only the propositional operators \AND, \OR, and negation are
  allowed to appear in the formulae still hold for the more general
  cases that we consider. This does not follow trivially, since there
  is no obvious strategy that converts every $B$-formula into a
  formula using only the standard connectives without leading to an
  exponential increase in formula length. The issues here are similar
  to the ``succinctness gap'' between the logics LTL+Past and LTL
  discussed in~\cite{mar04}. The proof of Parts (1) and (2) of the
  following lemma is a variation of the proof for Theorem 3.4 in
  \cite{bhss06}, where, using a similar reduction, an analogous result
  for circuits was proved.

  \begin{lemma}\label{lemma:PSPACE_ub_all}
  Let $B$ be a finite set of Boolean functions. Then the following holds:
  \begin{enumerate}[\em(1)]
   \item If $M \subseteq \set{\F,\G,\U,\S,\X}$, then $\tsat{M,B}$ is in \PSPACE,
   \item if $M \subseteq \set{\F,\G}$, then $\tsat{M,B}$ is in \NP, and
   \item if $M \subseteq \set{\X}$, then $\tsat{M,B}$ is also in \NP.
  \end{enumerate}
  \end{lemma}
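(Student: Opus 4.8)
The plan is to reduce $\tsat{M,B}$ to $\tsat{M,\{\wedge,\vee,\neg\}}$ with at most a polynomial blow-up in formula size, so that the known upper bounds from Theorem~\ref{theorem:sicl85} together with the standard \PSPACE\ and \NP\ algorithms for LTL carry over. The naive approach — replacing every connector from $B$ by its defining $\{\wedge,\vee,\neg\}$-formula — fails because a fixed such formula may repeat its inputs, causing an exponential explosion when nested. To avoid this, I would introduce fresh propositional variables as ``names'' for the subformulae. Given an input $\varphi \in \LL{M,B}$, enumerate its subformulae $\psi_1,\dots,\psi_m$ (so $m$ is linear in $|\varphi|$), introduce fresh variables $y_1,\dots,y_m$, and for each $\psi_i$ write a ``definition clause'' that forces $y_i$ to be equivalent to the operator applied to the names of the immediate subformulae: if $\psi_i = f(\psi_{i_1},\dots,\psi_{i_n})$ with $f \in B$, the clause is $y_i \leftrightarrow f(y_{i_1},\dots,y_{i_n})$; if $\psi_i = g(\psi_{i_1},\dots)$ for a temporal operator $g \in M$, the clause is $y_i \leftrightarrow g(y_{i_1},\dots)$; and if $\psi_i = x$ is a variable, the clause is $y_i \leftrightarrow x$. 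Each such biconditional, when expanded over $\{\wedge,\vee,\neg\}$, has constant size (depending only on $B$, which is fixed), so the conjunction of all clauses has size linear in $|\varphi|$.

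The subtlety is that these definition clauses must hold \emph{at every state} of the structure, not just at the state where $\varphi$ is evaluated, because the temporal operators look at other states. So I would set $\varphi' := y_{\mathrm{root}} \wedge \G^{*}\big(\bigwedge_i \delta_i\big)$, where $\delta_i$ is the definition clause for $\psi_i$ and $\G^{*}$ is a ``globally in both directions'' operator — but since the fragments we deal with may not contain $\G$, and $\G$ only speaks about the future, I have to be more careful. When $\S \notin M$, the formula $\varphi$ only constrains states at or after the evaluation point, so $\G$ (expressible via $\F$ when $\F \in M$, or simulated otherwise) suffices, and when $M \subseteq \{\F,\G\}$ or $M \subseteq \{\X\}$ one checks directly which states are reachable. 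When $\S \in M$ one needs a biconditional propagated to the whole line; here I would instead argue semantically: one shows that from any structure satisfying $\varphi$ one can \emph{define} the values of the $y_i$ at every state (set $y_i$ true at $s_j$ iff $S,s_j \vDash \psi_i$), and this witnesses $\varphi'$; conversely any structure for $\varphi'$ satisfies $\varphi$ at the root by an induction on subformula structure using the clauses. This avoids needing a two-sided $\G$ in the object language by folding the ``at every state'' requirement into the correctness proof rather than the formula — but to keep $\varphi'$ inside the fragment $\LL{M,\{\wedge,\vee,\neg\}}$ one still needs $M$ to express ``at every state'' or the argument must be adapted per case.

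Concretely, the cleanest route for Part (1): if $\F \in M$ or $\G \in M$, use $\G$ to propagate the clauses forward, and handle the ``past'' direction (only relevant if $\S \in M$) by noting the semantics of $\S$ only requires the clauses at states already covered or by adding the clauses under $\S$-iteration; if $M \subseteq \{\X\}$, only finitely many states (a prefix of length $\le |\varphi|$) are relevant, so conjoin the clauses guarded by $\X^0,\X^1,\dots,\X^{|\varphi|}$; if $M \subseteq \{\U\}$ or contains $\U$, use $\mathit{true}\U$ to reach all future states. In every case $\varphi'$ lies in $\LL{M,\{\wedge,\vee,\neg\}}$ (or a fragment with the same complexity) and has polynomial size, so membership in \PSPACE\ follows from Theorem~\ref{theorem:sicl85} and the fact that \PSPACE\ is closed under polynomial-time many-one reductions; Parts (2) and (3) follow the same way, using that $\tsat{\{\F\},\{\wedge,\vee,\neg\}}$ is in \NP\ and that the $\{\X\}$-only and $\{\F,\G\}$-only fragments have small-model properties yielding \NP\ upper bounds. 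The main obstacle I anticipate is the bookkeeping for the case $\S \in M$: making sure the definition clauses are enforced at \emph{all} states, including those strictly before the evaluation point, without an explicit past-$\G$, and verifying that the resulting formula stays within the claimed fragment; this is exactly the ``succinctness''-type issue alluded to in the lemma's preamble, and is why the proof is stated to be a variation of the circuit argument in~\cite{bhss06} rather than a one-line observation.
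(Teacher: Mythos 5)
Your core construction is the paper's: introduce a fresh variable naming each subformula, write a biconditional definition clause for each (of constant size, since $B$ is fixed), and force the clauses to hold at every relevant state, so the blow-up is linear rather than exponential. Parts (2) and (3) are also handled as in the paper ($\G$ expressed via $\F$ for the $\{\F,\G\}$ case; a bounded-prefix small-model argument for the $\{\X\}$ case).

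The gap is in your handling of $\S\in M$. You impose on yourself the constraint that $\varphi'$ stay inside $\LL{M,\{\wedge,\vee,\neg\}}$, and this creates a real problem that your proposed workarounds do not close: if the definition clauses are only enforced at the evaluation point and its future, then in a model of $\varphi'$ the naming variables are unconstrained at states strictly in the past, and the converse direction of the correctness proof fails --- for $\psi_i=\psi_j\S\psi_\ell$ the truth of $y_j\S y_\ell$ at the root depends precisely on the values of $y_j,y_\ell$ at those unconstrained states. Your claim that ``the semantics of $\S$ only requires the clauses at states already covered'' is false ($\S$ reaches arbitrarily far into the past), and ``folding the requirement into the correctness proof'' only yields the easy direction (a model of $\varphi$ gives a model of $\varphi'$), not the one you need. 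The resolution is to drop the self-imposed constraint: for an upper bound you may reduce to \emph{any} fixed problem in \PSPACE, and the paper reduces every $\tsat{M,B}$ with $M\subseteq\{\F,\G,\U,\S,\X\}$ to $\tsat{\{\U,\S,\X\},\{\wedge,\vee,\neg\}}$, enforcing each clause $f_i$ at every state of the line via the conjunct $\G f_i\wedge\neg(\mathop{true}\S\neg f_i)$, with $\G$ and $\F$ taken as abbreviations over $\U$. With that one change your argument goes through.
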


  \proof
    For (1), we will show that $\tsat{M,B} \redlogm \tsat{\set{\U,\S,\X},\set{\wedge,\vee,\neg}},$ and
    for (2), we will show that $\tsat{M,B} \redlogm \tsat{\set{\F},\set{\wedge,\vee,\neg}}.$ The complexity result for these cases then follows from Theorem~\ref{theorem:sicl85}.
    \ifreport
    \else
      The proof for case (3) is omitted and given in \cite{bsssv06}.
    \fi

    The construction for (1) and (2) is nearly identical: Let $\varphi$ be a formula with arbitrary temporal operators and Boolean functions from $B$.
    We recursively transform the formula to a new formula using only the Boolean operators $\wedge$, $\vee$, and $\neg$, and the temporal
    operators \U, \S, and \X\ for the first case
    and the temporal operator \F\ for the second case\ifreport s\fi.
    For this we construct several formulae, which will be connected via conjunction. Let $k$ be the number of subformulae of $\varphi$.
    Accordingly let $\varphi_1, \dots, \varphi_k$ be those subformulae with $\varphi=\varphi_1$. Let $x_1,\dots,x_k$ be new
    variables, \ie, distinct from the input variables of $\varphi$. For all $i$ from 1 to $k$ we make the following case
    distinction:

    \begin{enumerate}[$\bullet$]
    \item If $\varphi_i=y$ for a variable $y$, then let $f_i(\varphi) = x_i \leftrightarrow y$.
    \item If $\varphi_i=\X \varphi_j$, then let $f_i(\varphi) = x_i \leftrightarrow \X x_j$.
    \item If $\varphi_i=\F \varphi_j$, then let $f_i(\varphi) = x_i \leftrightarrow \F x_j$.
    \item If $\varphi_i=\G \varphi_j$, then let $f_i(\varphi) = x_i \leftrightarrow \G x_j$.
    \item If $\varphi_i=\varphi_j \U \varphi_\ell$, then let $f_i(\varphi) = x_i \leftrightarrow x_j \U x_\ell$.
    \item If $\varphi_i=\varphi_j \S \varphi_\ell$, then let $f_i(\varphi) = x_i \leftrightarrow x_j \S x_\ell$.
    \item If $\varphi_i=g(\varphi_{i_1},\dots, \varphi_{i_n})$ for some $g \in B$,
          then let $f_i(\varphi) = x_i \leftrightarrow h(x_{i_1},\dots,x_{i_n})$,
          where $h$ is a formula using only $\wedge$, $\vee$, and $\neg$, representing the function $g$.
    \end{enumerate}

    Such a formula $h$ always exists with constant length, because the set $B$ is fixed and does not depend on the input.
    Now let $f(\varphi)=x_1 \wedge \bigwedge_{i=1}^k (\G f_i(\varphi) \wedge \neg (\mathop{true} \S \neg f_i(\varphi)))$ for case (1) and
    $f(\varphi)=x_1 \wedge \bigwedge_{i=1}^k \G f_i(\varphi)$ for case (2). The part $\G f_i(\varphi)$ makes sure that $f_i(\varphi)$ holds in every
    future state of the structure and $\neg (\mathop{true} \S \neg f_i(\varphi)))$ does the same for the past states of the structure.
    Additionally we consider $x \leftrightarrow y$ as a shorthand for $(x \wedge y) \vee (\neg x \wedge \neg y)$.
For case (1) we consider $\F x$ as a shorthand for $\mathop{true} \U x$ and $\G x$
    as a shorthand for $\neg(\mathop{true} \U \neg x)$, and for case (2) we consider $\G x$ as a shorthand for $\neg\F\neg x$. Thus we have that $f(\varphi)$ is from $\LL{\{\U,\S,\X\},\{\wedge,\vee,\neg\}}$ in case (1) and from $\LL{\{\F\},\{\wedge,\vee,\neg\}}$ in case (2).
%
%
 Furthermore $f$ is computable in logarithmic space, because the length of $f_i$ is polynomial
    and neither $\leftrightarrow$ nor the formulae $h$ occur nested. In order to show that
    $f$ is the reduction we are looking for, we still need to prove that $\varphi$ is satisfiable if and only if
    $f(\varphi)$ is satisfiable. Assume an arbitrary structure $S$, such that
    $S,s_i \vDash f(\varphi)$ for some $s_i$. We first prove by induction on the structure of the formula that $x_i$ holds if and only if $\varphi_i$ holds
    in every state $s$ of $S$ (for (1)) respectively in every state which lies in the future of $s_i$ (for (2)).
    Therefore for (1) let $s$ be an arbitrary state and for (2) let $s$ be an arbitrary state in the future of $s_i$. Thus by construction
    of $f(\varphi)$ the formulae $f_p(\varphi)$ hold at $s$ for all $1 \leq p \leq k$. Then the following holds:

    \begin{enumerate}[$\bullet$]
    \item
      If $\varphi_p=y$ for a variable $y$, then $f_p(\varphi) = x_p \leftrightarrow y$ and trivially $S, s \vDash x_p$
      iff $S, s \vDash y$.
    \item
      If $\varphi_p=\X\varphi_j$, then $f_p(\varphi) = x_p \leftrightarrow \X x_j$. Thus $S,s \vDash x_p$ iff for the
      successor state $s'$ of $s$, we have $S,s' \vDash x_j$. By induction this is equivalent to $S,s' \vDash \varphi_j$ and
      therefore $S,s \vDash \varphi_p$ iff $S,s \vDash x_p$.
    \item
      The cases for the temporal operator $\F$ or $\G$ work analogously.
    \item
      If $\varphi_p = \varphi_j \U \varphi_\ell$, then $f_p(\varphi) = x_p \leftrightarrow x_j \U x_\ell$. Thus $S,s \vDash x_p$
      iff there exists a state $s'$ in the future of $s$, such that $S,s' \vDash x_\ell$ and in all states $s_m$ in between
      (including $s$) $S,s_m \vDash x_j$. By induction this is equivalent to $S,s' \vDash \varphi_\ell$ and for all states
      in between $S,s_m \vDash \varphi_j$ and therefore $S,s \vDash \varphi_p$ iff $S,s \vDash x_p$.
    \item  \ifreport
      If $\varphi_p = \varphi_j \S \varphi_\ell$, then $f_p(\varphi) = x_p \leftrightarrow x_j \S x_\ell$. Thus $S,s \vDash x_p$
      iff there exists a state $s'$ in the past of $s$, such that $S,s' \vDash x_\ell$ and in all states $s_m$ in between
      (including $s$) $S,s_m \vDash x_j$. By induction this is equivalent to $S,s' \vDash \varphi_\ell$ and for all states
      in between $S,s_m \vDash \varphi_j$ and therefore $S,s \vDash \varphi_p$ iff $S,s \vDash x_p$.
     \else The case for the temporal operator \S\ works analogously to \U. \fi
    \item
      If $\varphi_p=g(\varphi_{i_1},\dots, \varphi_{i_n})$, then $f_p(\varphi) = x_p \leftrightarrow h(x_{i_1},\dots,x_{i_n})$,
      where $h$ is a formula using only $\wedge$, $\vee$, and $\neg$, representing the function $g$. Thus
      $S,s \vDash x_p$ iff $S,s \vDash h(x_{i_1},\dots,x_{i_n})$. Let $I$ be the subset of $I^n=\{i_1,\dots,i_n\}$, such that
      $S,s \vDash x_m$ for all $m \in I$ and $S,s \vDash \neg x_m$ for all $m \in I^n \setminus I$.
      By induction $S,s \vDash \varphi_m$ for all $m \in I$ and $S,s \vDash \neg \varphi_m$ for all
      $m \in I^n \setminus I$ and therefore $S,s \vDash h(\varphi_{i_1},\dots,\varphi_{i_n})$. Since
      $h$ represents the function $g$, we have that $S,s \vDash \varphi_p$ iff $S,s \vDash x_p$.
    \end{enumerate}

    Now, assume that $f(\varphi)$ is satisfiable. Then there exists a structure
    $S, s_i \vDash f(\varphi)$ and thus $S, s_i \vDash x_1$. Since in every state $x_j$ holds if and only if $\varphi_j$ holds,
    we have that $S, s_i \vDash \varphi = \varphi_1$. For the other direction, assume that $\varphi$ is satisfiable.
    Then there exists a structure $S,s_i \vDash \varphi=\varphi_1$. Now we can extend $S$ by adding new variables
    $x_1,\dots,x_k$ in such a way, that $x_j$ holds in a state $s$ from $S$ if and only if $\varphi_j$ holds in that state.
    Call this new structure $S'$. Then by construction of $f(\varphi)$, we have $S',s_i \vDash f(\varphi)$, since in every state
    $x_j$ holds if and only if $\varphi_j$ holds.
    \ifreport
      This concludes the proof of the first two cases.

      \bigskip\noindent
    \newcommand{\xdepth}[1]{\mathtext{depth}_{\X}\!\left(#1\right)}%
    We now show (3). For a formula $\varphi$ in which $\X$ is the only temporal operator, let $\xdepth\varphi$ denote the maximal nesting degree of the $\X$-operator in $\varphi,$ which we call the $\X$-depth of $\varphi.$ It is obvious that this number is linear in the length of $\varphi.$ Therefore, to show that the problem can be solved in \NP, it suffices to prove the following:

    \begin{enumerate}[(a)]
	\item Such a formula $\varphi$ is satisfiable if and only if there is a structure $S$ with the sequence $(s_i)_{i\in\mathbb N}$ such that for every $i>\xdepth\varphi,$ every variable in $s_i$ is false, and $S,s_0\models\varphi.$
        \item Given the assignments to the variables in the first $\xdepth\varphi$ states in the structure above, it can be verified in polynomial time if $S,s_0\models\varphi.$
    \end{enumerate}

    These claims immediately imply the complexity result. For the first point, it obviously suffices to show one direction. Therefore, let $S$ be an arbitrary structure with sequence $(s_i)_{i\in\mathbb{N}}$ such that $S,s_0\models\varphi,$ and let $S'$ be the structure with sequence $(s'_i)_{i\in\mathbb N}$ obtained from $S$ as follows: For $i\leq\xdepth\varphi,$ the assignment of the variables in the state $s'_i$ is the same as in $s_i.$ For $i>\xdepth\varphi,$ every variable is false in $s'_i.$ To prove claim (a) above, it suffices to prove that $S',s'_0\models\varphi.$

    To show this, we prove that for every subformula $\psi$ of $\varphi$ and every $i\leq\xdepth\varphi,$ if $\xdepth\psi\leq\xdepth\varphi-i,$ then $S,s_i\models\psi$ if and only if $S',s'_i\models\psi.$ For $i=0$ and $\psi=\varphi,$ this implies the desired result $S',s'_0\models\varphi.$

    We show the claim by induction on the formula $\psi.$ If $\psi$ is a variable, then, by construction, $S',s'_i\models\psi$ if and only if $S,s_i\models\psi,$ since the truth assignments of $s'_i$ and $s_i$ are identical. Now let $\psi$ be of the form $f(\psi_1,\dots,\psi_n)$ for an $n$-ary function $f\in B.$ In this case, it immediately follows that $\xdepth\psi=\max\set{\xdepth{\psi_1},\dots,\xdepth{\psi_n}}.$  Because of the prerequisites, $\xdepth\psi\leq\xdepth\varphi-i,$ and hence we know that for each $j\in\set{1,\dots,n},$ it holds that $\xdepth{\psi_j}\leq\xdepth\varphi-i.$ Therefore, we can apply the induction hypothesis to all of the $\psi_j,$ and we know that $S,s_i\models\psi_j$ if and only if $S',s'_i\models\psi_j.$ This immediately implies that $S,s_i\models\psi$ if and only if $S',s'_i\models\psi,$ since $f$ is a Boolean function.

    Finally, let $\psi$ be of the form $\X\xi$ for some formula $\xi.$ Hence, $\xdepth\psi=\xdepth\xi+1.$ Since $\xdepth\psi\leq\xdepth\varphi-i,$ this implies that $\xdepth\xi\leq\xdepth\varphi-(i+1).$ Hence, we can apply the induction hypothesis, and conclude that $S,s_{i+1}\models\xi$ if and only if $S',s'_{i+1}\models\xi.$ This immediately implies that $S,s_i\models\psi$ if and only if $S',s'_i\models\psi,$ and hence concludes the induction and the proof of claim (a).

    For claim (b), assume that $\varphi$ and the truth assignments for the first $\xdepth\varphi$ states in the structure $S$ are given, where all variables are assumed to be false in all further states. We can now, for each subformula $\psi$ of $\varphi,$ mark those states $s_i$ (for $i\leq\xdepth\varphi$) in which $\psi$ holds. Starting with $j=0,$ consider the subformulae of $\X$-depth $j.$ The question if a formula of $\X$-depth $j$ holds at a given state can easily be decided when this is known for all formulae of lower $\X$-depth. For $j=0,$ this can be decided easily, since the subformulae of $\X$-depth $0$ are exactly the propositional subformulae, and for these, each state can be considered separately. Additionally, observe that in the structure $S,$ all states beyond the first $\xdepth\varphi$ states satisfy exactly the same set of subformulae of $\varphi,$ hence only $\xdepth\varphi+1$ many states need to be considered.\qed
    \fi

  The following two theorems show that the case in which our Boolean operators are able to express the function $x\wedge\overline{y},$
  leads to \PSPACE-complete problems in the same cases as for the full set of Boolean operators. This function
  already played an important role in the classification result from \cite{lew79}, where it also marked the
  ``jump'' in complexity from polynomial time to \NP-complete.

  \begin{theorem}\label{theorem:SAT(G,X|F,X|U,X;S1) PSPACE-h}
    Let $B$ be a finite set of Boolean functions such that $\cS_1\subseteq\clone B$. Then 
      $\tsat{\{\G,\X\},B}$ and $\tsat{\{\F,\X\},B}$ are \PSPACE-complete.
  \end{theorem}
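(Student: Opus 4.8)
The upper bound is immediate from Lemma~\ref{lemma:PSPACE_ub_all}(1): for $M=\{\G,\X\}$ and $M=\{\F,\X\}$ we have $\tsat{M,B}\in\PSPACE$ for every finite $B$. The remaining task is \PSPACE-hardness, and the plan is to reduce to the two problems from the \PSPACE-complete fragments of Theorem~\ref{theorem:sicl85}(2). Three algebraic observations drive everything: $\cS_1\subseteq\clone B$ is equivalent to $x\wedge\overline y\in\clone B$; the clone $\clone{\{x\wedge\overline y\}}$ already contains the constant $0$ (as $x\wedge\overline x$) and conjunction (as $x\wedge\overline{(x\wedge\overline y)}$); and $\clone{\{x\wedge\overline y,\,1\}}=\cBF$. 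Hence, \emph{given} a distinguished variable that behaves like the constant $1$ at every relevant state, every $\{\wedge,\vee,\neg\}$-subformula can be replaced by a constant-size $B$-gadget. We also use that $\tsat{\{\G,\X\},\{\wedge,\vee,\neg\}}$ is \PSPACE-complete, being logspace-interreducible with $\tsat{\{\F,\X\},\{\wedge,\vee,\neg\}}$ via $\G\psi\equiv\neg\F\neg\psi$ and $\F\psi\equiv\neg\G\neg\psi$.

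For $\tsat{\{\G,\X\},\{\wedge,\vee,\neg\}}\redlogm\tsat{\{\G,\X\},B}$ the reduction is then easy. Given $\varphi$ (which, as $\S$ is absent, we may assume is to be satisfied in $s_0$), introduce a fresh name variable $x_\psi$ for every subformula $\psi$ of $\varphi$ together with one further fresh variable $t$ playing the role of the constant $1$, and output
\[
  \varphi' \;=\; x_\varphi \;\wedge\; \G t \;\wedge\; \bigwedge_\psi \G\bigl(\mathsf{def}_\psi\bigr),
\]
where $\mathsf{def}_\psi$ is the constant-size $B$-formula --- obtained from the gadgets above, using $t$ for $1$ and $x\wedge\overline x$ for $0$ --- asserting that $x_\psi$ agrees with the outermost operator of $\psi$ applied to the name variables of the immediate subformulae (so $x_\psi\leftrightarrow x_{\psi_1}\wedge x_{\psi_2}$, or $x_\psi\leftrightarrow\X x_{\psi_1}$, or $x_\psi\leftrightarrow\G x_{\psi_1}$, each spelled out with $B$-gadgets). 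Since $\G t$ pins $t$ to true in every state reachable from $s_0$, a routine induction on $\psi$ shows that $S,s\models x_\psi$ iff $S,s\models\psi$ in every such state of every model of $\varphi'$; together with the conjunct $x_\varphi$ this gives both directions of the reduction, and $\varphi'$ has polynomial length because the $\leftrightarrow$'s and $B$-gadgets occur unnested, exactly as in the proof of Lemma~\ref{lemma:PSPACE_ub_all}.

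The case $\tsat{\{\F,\X\},B}$ is the genuinely delicate one and is where I expect the real work to lie. Now $\clone B$ need not contain $\G$ or negation --- in the extreme case $\clone B=\cS_1$, all of whose members are $0$-reproducing --- so no variable can be forced to hold in \emph{all} future states (and unfolding a ``$\G$'' into a finite conjunction is useless, since $\{\F,\X\}$-formulae may require exponentially long models), and the device ``$\G t$'' of the previous paragraph is unavailable. The way out is to exploit that $x\wedge\overline y$ supplies negation \emph{for free in its second argument}: e.g.\ $x\wedge\overline{\F\psi}$ expresses ``$x$ now and $\psi$ never afterwards'', and $x\wedge\overline{(y\wedge\X z)}$ expresses ``$x$ now and not ($y$ and next-$z$)''. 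So rather than translating an arbitrary formula with a global constant, the plan is to reduce \emph{directly} --- e.g.\ from QBF, along the lines of the Sistla--Clarke hardness argument --- designing the reduction so that every negation appearing in the target formula occurs as the right operand of an $x\wedge\overline y$-gadget; the corresponding left operands are local ``anchor'' variables that only need to hold at the current or at a witnessing state, which is guaranteed the moment the whole formula is satisfied, so no global constant is required.

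The main obstacle is precisely to make this last point go through: to carry out the hardness construction with all negations confined to the ``safe'' slot, and to verify the polarity and anchor bookkeeping so that the produced $\{\F,\X\}$-$B$-formula is satisfiable exactly when the QBF is true. Once that is in place, everything else --- the \PSPACE\ upper bound, the $\{\G,\X\}$ reduction, and the polynomial size bounds --- is routine.
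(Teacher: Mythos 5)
Your treatment of $\tsat{\{\G,\X\},B}$ is sound and essentially a variant of the paper's argument: the paper substitutes short $B\cup\{1\}$-representations of $\wedge,\vee,\neg$ directly into $\varphi$ (using Lemma~\ref{lemma:short formulae} to control the length), replaces the constant $1$ by a fresh variable $t$, and conjoins $t\wedge\G t$; your Tseitin-style renaming with unnested gadgets achieves the same thing and even sidesteps the need for Lemma~\ref{lemma:short formulae}. The upper bound and the interreducibility of the $\{\G,\X\}$ and $\{\F,\X\}$ fragments over $\{\wedge,\vee,\neg\}$ are also handled correctly.

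The $\{\F,\X\}$ case, however, contains a genuine gap, and it stems from a false premise. You assert that since $\clone B$ may equal $\cS_1$, ``no variable can be forced to hold in all future states'' and hence the device $\G t$ is unavailable. This is exactly what the paper's key observation refutes: writing $g(x,y)=x\wedge\overline{y}$, the formula
\[
  g\bigl(t,\;\F\bigl(g(t,\X t)\bigr)\bigr)\;=\;t\wedge\overline{\F\bigl(t\wedge\overline{\X t}\bigr)}
\]
is equivalent to $\G t$ --- it says $t$ holds now and there is no future state at which $t$ holds but fails at the successor, which by induction pins $t$ to true everywhere in the future. It uses only $\F$, $\X$, and the single function $g\in\cS_1$ (the $0$-reproducing nature of $\cS_1$ is irrelevant here, since we only need the formula to be \emph{satisfiable} by the intended structures, not true under the all-zero assignment). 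With this constant-size gadget in hand, the $\{\F,\X\}$ case collapses to the same construction as the $\{\G,\X\}$ case. Having missed this, you fall back on a direct QBF reduction ``with all negations confined to the safe slot,'' but you explicitly leave its construction and verification open (``the main obstacle is precisely to make this last point go through''), so the hardness of $\tsat{\{\F,\X\},B}$ is not actually proved in your write-up. The statement is nonetheless true, and the missing step is the single line displayed above.
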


  \proof
      {Since it is possible to express \F\ using \G\ and negation, Theorem~\ref{theorem:sicl85}
      implies that $\tsat{\{\G,\X\},\set{\wedge,\vee,\neg}}$ and $\tsat{\{\F,\X\},\set{\wedge,\vee,\neg}}$ are \PSPACE-hard.
      Now, let $\varphi$ be
      a formula in which only temporal operators $\G$ and $\X$, or $\F$ and $\X$, and
      the Boolean connectives $\wedge,\vee,$ and $\neg$  appear.
      Let $B'=B\cup\set{1}$. The complete structure of Post's lattice~\cite{bcrv03} shows that $\clone{B'}=\cBF$.
      Now we can rewrite $\varphi$ as a $B'$-formula with the
      same temporal operators appearing. Due to Lemma~\ref{lemma:short formulae}, we can express the crucial operators
      $\wedge,\vee,\neg$ with short $B'$-formulae, \ie, formulae in which every relevant variable occurs only once.
      Therefore, this transformation can be performed in polynomial time. Now, in the $B'$-representation of $\varphi$, we
      exchange every occurrence of $1$ with a new variable $t$, and call the result $\varphi'$, which is a $B$-formula. It
      is obvious that $\varphi$ is satisfiable if and only if the $B$-formula $\varphi'\wedge t\wedge\G t$ is. Since
      $B\supseteq\cS_1$, we can express the occurring conjunctions using operators from $B$ (since these are a constant
      number of conjunctions, we do not need to worry about needing long $B$-formulae to express conjunction). This finishes
      the proof for $\tsat{\{\G,\X\},B}$. For the problem $\tsat{\{\F,\X\},B}$, observe that the function
      $g(x,y)=x\wedge\overline{y}$ generates the clone $\cS_1$, and therefore there is some $B$-formula equivalent to $g$.
      Now observe that the formula $t\wedge\overline{\F(t\wedge\overline{\X t})}=g(t,\F(g(t, \X t)))$ is equivalent to $\G t.$ Since this formula is independent of the input formula $\varphi$, this can be computed in polynomial time, and therefore this formula can be used to express $\varphi'\wedge t\wedge\G t$ in the same way as in the first case. Additionally, observe that if the operator $\F$ appears in the original formula $\varphi,$ then a subformula $\F\psi$ can be expressed as $(1\U\psi).$ Hence we conclude from Theorem~\ref{theorem:sicl85}\ref{theorem:sicl85:SAT(F,X|U|U,S,X;BF) PSPACE-c} that $\tsat{\{\U,\X\},\cBF}$ is \PSPACE-complete.\qed
      }


The construction in the proof of Theorem~\ref{theorem:SAT(G,X|F,X|U,X;S1) PSPACE-h} does not seem to be applicable to the languages
with \U\ and/or \S, as it requires a way to express $\G t$ using these operators. Hence, proving the desired completeness result
requires significantly more work.
Note that the case where $B$ contains the usual operators \AND, \OR, and negation, has already been proved in~\cite{mar04}. Our construction 
shows that hardness already holds for a class of propositional operators with less expressive power.


\begin{theorem}\label{theorem:SAT(S;BF) SAT(S|U;S1) PSPACE-h}
        Let $B$ be a finite set of Boolean functions with $\cS_1 \subseteq \clone B$.
        Then $\tsat{\{\S\},B}$ and $\tsat{\{\U\},B}$ are \PSPACE-complete.
  \end{theorem}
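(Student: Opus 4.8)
The plan is as follows. The upper bound is immediate: both $\tsat{\{\S\},B}$ and $\tsat{\{\U\},B}$ lie in \PSPACE\ by Lemma~\ref{lemma:PSPACE_ub_all}(1), so everything rests on \PSPACE-hardness, which I would obtain by a logarithmic-space reduction from the validity problem for quantified Boolean formulae (QBF). In contrast to Theorem~\ref{theorem:SAT(G,X|F,X|U,X;S1) PSPACE-h}, I would \emph{not} first reduce from the $\{\wedge,\vee,\neg\}$-fragment and then translate operators: that translation hinges on expressing $\G t$ for a fresh variable $t$, and $\G t$ does not seem expressible using only $\U$, or only $\S$, together with operators from $\cS_1$. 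Note that $\cS_1 = \clone{\{x\wedge\overline{y}\}}$ contains $\wedge$ and the constant $0$, but neither $\vee$, nor $\neg$, nor the constant $1$; this poverty of propositional means is exactly what forces a from-scratch construction. Hence I would give a direct reduction from QBF to $\tsat{\{\S\},B}$ and a mirror-image one to $\tsat{\{\U\},B}$, the $\S$-case being the more delicate of the two.

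Fix a QBF $\Phi = Q_1x_1\cdots Q_nx_n\,\psi(x_1,\dots,x_n)$ with $\psi$ quantifier-free. Its evaluation is a complete binary tree of depth $n$: a depth-$i$ node corresponds to a branch fixing $x_1,\dots,x_i$ and has two children, for $x_{i+1}=0$ and $x_{i+1}=1$; a leaf is \emph{good} iff $\psi$ holds under its branch; an internal node with $Q_{i+1}=\forall$ is good iff both children are good, with $Q_{i+1}=\exists$ iff some child is; and $\Phi$ is valid iff the root is good. I would encode a depth-first traversal of this tree into a single infinite linear structure. The states are grouped into consecutive \emph{blocks} of $O(n)$ states, one block per visited node, and auxiliary variables in a block record its depth in a unary ``thermometer'' code $d_1,\dots,d_n$ (so that depth comparisons stay negation- and disjunction-free), the partial assignment $x_1,\dots,x_i$ along the branch, and a bit $\mathit{good}$. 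The output formula $\Theta$ is a conjunction of gadgets enforcing: (i) the block and delimiter structure, plus consistency of the depth counters from each block to the next, so that the block sequence really is a DFS traversal of a depth-$n$ binary tree (a child block immediately follows its parent, a right subtree follows the corresponding left subtree, control returns to the parent afterwards); (ii) that the partial assignment is inherited correctly from a parent block to each child; (iii) at a leaf block, that $\mathit{good}$ equals the value of $\psi$ on the recorded assignment; (iv) at an internal block, that $\mathit{good}$ is the $\forall$- resp.\ $\exists$-combination of the $\mathit{good}$ bits of its two child blocks --- which, since we traverse in post-order, lie in the \emph{past} of the block where control returns, and can therefore be located with $\S$ as ``the last earlier block of depth one greater''; (v) that the root block is good. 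For $\tsat{\{\U\},B}$ one uses a pre-order traversal instead, so that the blocks one must consult lie in the future and are located with $\U$.

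The main obstacle is gadget family (i): making a linear model behave like a tree with the available means, which is where almost all of the work lies. Were $\G$, negation, disjunction, and the constant $1$ at hand, one would simply state global invariants and compare counters by arbitrary Boolean combinations; here none of these is available. So every invariant must be recast as a \emph{local} constraint that ``looks backward with $\S$'' (resp.\ ``forward with $\U$'') to exactly the right block; every control predicate (``a block begins here'', ``this block has depth $i$'', ``this block is a left/right child'', ``the most recently returned-to block had depth $d$'') must be carried by dedicated \emph{positive} atoms whose intended meaning is pinned down by the temporal gadgets rather than by Boolean negation; and the $\exists/\forall$ bookkeeping must be threaded through these atoms. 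Verifying that a well-formed block is followed by precisely the correct successor block, with nothing in between violating the intended shape, is the technical heart of the proof; once the traversal is correctly forced, parts (ii)--(v) are routine, and correctness of the reduction ($\Phi$ valid $\iff$ $\Theta$ satisfiable) follows by reading the tree off any model of $\Theta$ and, conversely, writing $\Theta$ a model from the evaluation of $\Phi$. Finally, since $B$ may properly extend $\{x\wedge\overline{y}\}$, each propositional operator occurring in $\Theta$ is replaced by a $B$-formula computing it; as only constantly many distinct operators occur, and as Lemma~\ref{lemma:short formulae} keeps the relevant variables single-occurring wherever a blow-up would otherwise threaten, the whole reduction runs in logarithmic space.
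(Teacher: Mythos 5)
Your upper bound and your choice of QBF as the source problem match the paper, but your encoding is a different one, and in its present form it has a genuine gap: essentially all of the content of the theorem is concentrated in your gadget family (i), and you do not construct it. The paper does not attempt to force a depth-first traversal with returns to parent nodes and a $\mathit{good}$ bit combined over children. Locating, with $\S$ alone, ``the left child of the current block'' requires skipping backwards over the entire right sibling subtree --- a balanced-parenthesis-matching condition --- and it is far from clear that unary depth counters plus ``dedicated positive atoms'' suffice for this; you would also have to cope with stuttering, since (as the paper notes) the truth of pure-$\S$ formulae is invariant under finitely repeating states, so blocks cannot be pinned to fixed positions. The paper instead flattens the tree into its leaves: an auxiliary claim shows that a nested-$\S$ formula built from pairwise contradictory satisfiable conjunctions forces every model to contain a backward sequence of intervals realizing, in order, all $2^k$ assignments to the universally quantified variables (the formulae $\alpha$ and $\beta^j[i]$); the existentially quantified variables are left free per universal prefix, so a satisfying model is precisely a choice of Skolem functions (the $\gamma^1[i]\vee\gamma^2[i]$ disjunctions), and validity of $\psi$ corresponds to the conjunct $\varphi\S t_0$. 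No evaluation bit is propagated and no return-to-parent bookkeeping is needed; this is what makes the construction feasible.

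There is a second gap in your final paragraph. Since $\cS_1$ contains neither $\vee$, $\neg$, nor the constant $1$, Lemma~\ref{lemma:short formulae} yields short formulae only over $B\cup\{1\}$ (because $\clone{B\cup\{1\}}=\cBF$), and the constant $1$ must then be eliminated. Without $\G$ you cannot simply assert that a fresh variable $t$ is true everywhere; the paper forces $t$ locally by replacing each literal-conjunction $\psi_{\mathtext{lit}}$ with $\mathtext{and}_B(t,\psi^t_{\mathtext{lit}})$ inside the scope of every $\S$. Moreover, $\mathtext{and}_B$ may duplicate its arguments, so the outermost conjunction over all $i\in I_\forall\cup I_\exists$ (linearly many conjuncts) must first be rebalanced into a tree of logarithmic depth, or the substitution blows up exponentially. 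Your appeal to ``constantly many distinct operators'' addresses neither point. Until the traversal-forcing gadgets are written out and the propositional rewriting is handled as above, the proposal is a plan rather than a proof.
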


  \proof
    Since membership for \PSPACE\ is shown in Lemma~\ref{lemma:PSPACE_ub_all} we only need to show hardness. To do this, we give a reduction from $\mathtext{QBF}$. The main idea is to construct a temporal $B$-formula that requires satisfying models to simulate, in a linear structure, the quantifier evaluation tree of a quantified Boolean formula. Once we have ensured that models for the formula in fact are of this structure, we can prove that the quantified formula evaluation problem reduces to $\tsat{\{\S\},B}$.

    First we prove an auxiliary proposition for formulae of a special form which we use as building blocks in the construction. Intuitively the claim states that, given some propositional formulae $\varphi_1,\dots,\varphi_n$ that are pairwise contradictory, we can express that a model has a subsequence of states such that $\varphi_i$ holds in the $i$-th of these states.

    We cannot enforce that the $i$-th state always satisfies the $i$-th formula, since the truth of an LTL-formula using only $\S$ as a temporal operator is invariant under transformations of models that simply repeat a state finitely many times in the sequence.

        \begin{claim}
          Let $\varphi_1,\dots,\varphi_n$ be satisfiable propositional formulae such that $\varphi_i\rightarrow\neg\varphi_j$ is valid for all $i,j\in\{1,\dots,n\}$ with $i\neq j$. Then the formula
          \begin{equation*}
            \begin{split}
              \varphi =\varphi_1
                 \wedge(\varphi_1\S(\varphi_2\S(\dots\S(\varphi_{n-1}\S\varphi_n)\dots)))
                 \wedge((\dots((\varphi_1\S\varphi_2)\S\varphi_3)\S\dots)\S\varphi_n)
            \end{split}
          \end{equation*}
          is satisfiable and every structure $S$ that satisfies $\varphi$ in a state $s_m$ fulfills the following property: there exist natural numbers $0=a_0<a_1<\dots<a_n\leq m+1$ such that $m-a_i< j\leq m-a_{i-1}$ implies $S,s_j\vDash\varphi_i$ for every $i\in\{1\dots,n\}$.
        \end{claim}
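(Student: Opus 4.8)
The plan is to prove the two assertions separately, and essentially all of the work sits in one step of the second.

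For \emph{satisfiability}, I would build an explicit ``staircase'' structure. For each $i\in\{1,\dots,n\}$ fix an assignment $\mu_i$ satisfying the propositional formula $\varphi_i$ (these exist by hypothesis), put $m:=n-1$, and let $S$ be the structure in which state $s_{m-i+1}$ carries $\mu_i$ for $1\le i\le n$ (so $S,s_m\vDash\varphi_1$ and $S,s_0\vDash\varphi_n$), while all variables are false in $s_{m+1},s_{m+2},\dots$ A routine induction then gives $S,s_m\vDash\varphi$: by downward induction on $i$ one checks that $\varphi_i\S(\varphi_{i+1}\S(\dots\S\varphi_n)\dots)$ holds in $s_{m-i+1}$, taking the immediately preceding state as the witness for the outermost $\S$; by upward induction on $i$ one checks that $((\dots(\varphi_1\S\varphi_2)\S\dots)\S\varphi_i)$ holds in every $s_j$ with $m-i+1\le j\le m$, taking $s_{m-i+1}$ as the witness. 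For $i=n$ these are the second and third conjuncts, and $\varphi_1$ holds in $s_m$ by construction. The numbers $a_i:=i$ then witness the asserted property for $S$.

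For the \emph{structural property}, let $S,s_m\vDash\varphi$ and abbreviate $\rho_n:=\varphi_n$, $\rho_i:=\varphi_i\S\rho_{i+1}$ for $i<n$ (so the second conjunct is $\rho_1$), and $\lambda_2:=\varphi_1\S\varphi_2$, $\lambda_i:=\lambda_{i-1}\S\varphi_i$ for $3\le i\le n$ (so the third conjunct is $\lambda_n$). Unfolding the semantics of $\S$ in $S,s_m\vDash\rho_1$ produces indices $m=k_0\ge k_1\ge\dots\ge k_{n-1}\ge0$ with $S,s_j\vDash\varphi_i$ for all $k_i<j\le k_{i-1}$ and $1\le i\le n-1$, and $S,s_{k_{n-1}}\vDash\varphi_n$. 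Since the $\varphi_\ell$ are pairwise contradictory, $S,s_{k_i}\vDash\rho_{i+1}$ forces $s_{k_i}$ to satisfy exactly one of $\varphi_{i+1},\dots,\varphi_n$ and hence none of $\varphi_1,\dots,\varphi_i$; in particular $s_{k_1}\nvDash\varphi_1$, so $k_1<k_0$. If all the inequalities $k_{i-1}\ge k_i$ were strict we would be finished: put $a_i:=m-k_i$ for $i<n$ and $a_n:=m-k_{n-1}+1$; then $0=a_0<a_1<\dots<a_n$ with $a_n\le m+1$ (because $k_{n-1}\ge0$), and $\{j\mid m-a_i<j\le m-a_{i-1}\}$ equals $\{j\mid k_i<j\le k_{i-1}\}$ for $i<n$ and the singleton $\{k_{n-1}\}$ for $i=n$, on each of which the appropriate $\varphi_i$ holds.

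Thus everything reduces to showing $k_0>k_1>\dots>k_{n-1}$, \ie\ that no ``layer'' $\{j\mid k_i<j\le k_{i-1}\}$ degenerates to the empty set; this is the main obstacle. It cannot follow from $\rho_1$ alone, since an LTL formula built only from $\S$ cannot count and $\rho_1$ is satisfied by degenerate models that skip an intermediate layer; ruling these out is precisely the role of $\lambda_n$. I would argue by contradiction: let $i$ be least with $k_{i-1}=k_i$; then $s_{k_{i-1}}$ satisfies $\rho_i$ but not $\varphi_i$, hence satisfies some $\varphi_t$ with $t>i$. Unfolding $S,s_m\vDash\lambda_n$ from the outside and using repeatedly that $S,s_m\vDash\varphi_1$ is incompatible with every $\varphi_2,\dots,\varphi_n$, one obtains first that $S,s_m\vDash\lambda_j$ for all $j$, and then, by tracing the (largest) witnesses of these nested $\S$-operators backwards through the structure, that this backward trace has to meet the layers of the $\rho_1$-decomposition in strictly increasing index order, omitting none; evaluating this trace at indices around $k_{i-1}$, which lies on the boundary of the empty would-be $\varphi_i$-layer, and invoking pairwise contradiction at each relevant state produces the contradiction. (For $n=3$ this is exactly the observation that $(\varphi_1\S\varphi_2)\S\varphi_3$ forbids $s_{k_1}\vDash\varphi_3$.) Making this last backward-tracing argument precise for general $n$ is the delicate part; the satisfiability construction, the decomposition of $\rho_1$, and the final arithmetic with the $a_i$ are all routine.
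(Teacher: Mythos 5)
Your proof follows the same route as the paper's: the staircase model for satisfiability, the decomposition of the right-nested conjunct $\rho_1$ into weakly decreasing witness indices $m=k_0\ge k_1\ge\dots\ge k_{n-1}$ with the layer $(k_i,k_{i-1}]$ carrying $\varphi_i$, the observation that $s_m\vDash\varphi_1$ makes the first layer nonempty, the identification of the left-nested conjunct $\lambda_n$ as what forbids empty intermediate layers, and the final arithmetic with the $a_i$. All of that is correct, and the paper is in fact no more explicit than you are about the one remaining step: it simply asserts that the left-nested conjunct yields $a_1<\dots<a_{n-1}$.

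That remaining step is nonetheless a genuine gap in your write-up: the ``backward-tracing'' argument is not carried out, and as phrased (``the trace meets the layers in strictly increasing index order, omitting none'') it is close to a restatement of what has to be proved. It can be closed without any tracing. (i) As you note, since $s_m\vDash\varphi_1$ and the $\varphi_\ell$ are pairwise contradictory, the outermost witness of each $\lambda_j$ at $s_m$ lies strictly in the past, so $S,s_m\vDash\lambda_j$ for every $j\le n$. (ii) By induction on $j$, \emph{any} state satisfying $\lambda_j$ satisfies some $\varphi_{j'}$ with $j'\le j$: either the outermost witness is the current state, which then satisfies $\varphi_j$, or it is strictly earlier and the current state satisfies $\lambda_{j-1}$. (iii) Now suppose $k_{i-1}=k_i$ for some $i\ge 2$. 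Then $s_{k_{i-1}}$ satisfies $\rho_{i+1}$ and hence some $\varphi_t$ with $t\ge i+1$, while every state of $(k_{i-1},m]$ satisfies some $\varphi_j$ with $j\le i-1$; so no state of $[k_{i-1},m]$ satisfies $\varphi_i$, and the witness $p$ of the outermost $\S$ in $s_m\vDash\lambda_i=\lambda_{i-1}\S\varphi_i$ must satisfy $p<k_{i-1}$. Hence $k_{i-1}\in(p,m]$, so $S,s_{k_{i-1}}\vDash\lambda_{i-1}$, and by (ii) the state $s_{k_{i-1}}$ satisfies some $\varphi_{j'}$ with $j'\le i-1$ --- contradicting $s_{k_{i-1}}\vDash\varphi_t$ with $t\ge i+1$ and pairwise contradiction. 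With this lemma inserted your argument is complete and coincides with the paper's.
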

        \begin{proof}
          Clearly $\varphi$ is satisfiable: since all formulae $\varphi_i$ are satisfiable we can find a structure $S$ such that $S,s_i\vDash\varphi_{n-i}$ for all $i\in\{0,\dots,n-1\}$. 
          One can verify that $S$ satisfies $\varphi$ in $s_{n-1}$.

          Let $S$ be a structure that satisfies $\varphi$ in a state $s_m$. Since $\varphi_i\rightarrow\neg\varphi_j$ is valid for all $i,j\in\{1,\dots,n\}$ with $i\neq j$, in every state only one of the formulae $\varphi_i$ can be satisfied by $S$. Therefore and since $S,s_m\vDash \varphi_1\S(\varphi_2\S(\dots\S(\varphi_{n-1}\S\varphi_n)\dots))$ holds, there are natural numbers $0=a_0\leq a_1\leq\dots\leq a_{n-1}<a_n\leq m+1$ such that $m-a_i< l\leq m-a_{i-1}$ implies $S,s_l\vDash\varphi_i$ for every $i\in\{1\dots,n\}$. Since $S,s_m\vDash\varphi_1$, it holds that $a_1>0$. Because $S,s_m\vDash(\dots((\varphi_1\S\varphi_2)\S\varphi_3)\S\dots)\S\varphi_n$ we conclude that $a_1<\dots <a_{n-1}$, which proves the claim.
        \end{proof}
        Now we give the reduction from $\mathtext{QBF}$, which is \PSPACE-complete due to~\cite{sto77}, to $\tsat{\{\S\},B}$. Let
        $\psi=Q_1x_1\dots Q_nx_n\varphi$
        for some propositional $\{\wedge,\vee,\neg\}$-formula $\varphi$ with variables $x_1,\dots,x_n$ and for quantifiers $Q_1,\dots,Q_n\in\{\forall,\exists\}$.

        Let $I_{\forall}=\{p_1,\dots,p_k\}=\{i\mid Q_i=\forall\}$ and
        $I_{\exists}=\{q_1,\dots,q_l\}=\{i\mid Q_i=\exists\}$ such that $p_1<\dots<p_k$ and $q_1<\dots<q_l$.

        We construct a temporal formula $\psi'\in \LL{\{\S\},B}$ such that $\psi$ is valid if and only if $\psi'$ is satisfiable. Let $t_0,\dots,t_n,u_0,\dots,u_n$ be new variables. We start with defining some subformulae using propositional operators from $\set{\neg,\vee,\wedge}$, then we combine them to obtain $\psi'$, and afterwards turn $\psi'$ into a temporal $B$-formula.

        \begin{small}
          ~\par\vspace{-1.3\baselineskip}
          \begin{equation*}
              \alpha= u_0\wedge\overline{t_0}
                  \wedge(u_0\wedge\overline{t_0})\S((
                  \overline{u_0}\wedge\overline{t_0})\S(\overline{u_0}\wedge t_0)))
                  \wedge(((u_0\wedge\overline{t_0})\S(
                  \overline{u_0}\wedge\overline{t_0}))\S(\overline{u_0}\wedge t_0))
          \end{equation*}
          ~\par\vspace{-1.0\baselineskip}
          \begin{minipage}{.52\textwidth}
            \begin{equation*} \begin{split}
              \beta&^1[i] =\\ \ & (u_{i-1}\wedge\overline{t_{i-1}}\wedge u_i\wedge\overline{t_i}\wedge \overline{x_i})
            \S \\
            & \ \ \ \
            ((\overline{u_{i-1}}\wedge\overline{t_{i-1}}\wedge\overline{u_i}\wedge\overline{t_i}\wedge \overline{x_i})
            \S \\
            & \ \ \ \ \ \ \ \
            ((\overline{u_{i-1}}\wedge\overline{t_{i-1}}\wedge\overline{u_i}\wedge t_i\wedge \overline{x_i})
            \S \\
            & \ \ \ \ \ \ \ \ \ \ \ \
            ((\overline{u_{i-1}}\wedge\overline{t_{i-1}}\wedge u_i\wedge\overline{t_i}\wedge x_i)
            \S \\
            & \ \ \ \ \ \ \ \ \ \ \ \ \ \ \ \
            ((\overline{u_{i-1}}\wedge\overline{t_{i-1}}\wedge\overline{u_i}\wedge\overline{t_i}\wedge x_i)
            \S \\
            & \ \ \ \ \ \ \ \ \ \ \ \ \ \ \ \ \ \ \ \
            (\overline{u_{i-1}}\wedge t_{i-1}\wedge\overline{u_i}\wedge t_i\wedge x_i )))))
            \end{split}\end{equation*}
          \end{minipage}
          \hspace{-.12\textwidth}
          \begin{minipage}{.53\textwidth}
            \begin{equation*} \begin{split}
              \beta^2[i] &=\\ (((( &(u_{i-1}\wedge\overline{t_{i-1}}\wedge u_i \wedge\overline{t_i}\wedge \overline{x_i})
            \\
            & \ \ \ \ \S
            (\overline{u_{i-1}}\wedge\overline{t_{i-1}}\wedge\overline{u_i}\wedge\overline{t_i}\wedge \overline{x_i}))
            \\
            & \ \ \ \ \ \ \ \ \S
            (\overline{u_{i-1}}\wedge\overline{t_{i-1}}\wedge\overline{u_i}\wedge t_i\wedge \overline{x_i}))
            \\
            & \ \ \ \ \ \ \ \ \ \ \ \ \S
            (\overline{u_{i-1}}\wedge\overline{t_{i-1}}\wedge u_i\wedge\overline{t_i}\wedge x_i ))
            \\
            & \ \ \ \ \ \ \ \ \ \ \ \ \ \ \ \ \S
            (\overline{u_{i-1}}\wedge\overline{t_{i-1}}\wedge\overline{u_i}\wedge\overline{t_i}\wedge x_i))
            \\
            & \ \ \ \ \ \ \ \ \ \ \ \ \ \ \ \ \ \ \ \ \S
            (\overline{u_{i-1}}\wedge t_{i-1}\wedge\overline{u_i}\wedge t_i\wedge x_i)
            \end{split}\end{equation*}
          \end{minipage}
          ~\par\vspace{.3\baselineskip}
          \begin{minipage}{.48\textwidth}
            \begin{equation*} \begin{split}
              \gamma^1[i] = \ & (u_{i-1}\wedge\overline{t_{i-1}}\wedge u_i\wedge\overline{t_i}\wedge\overline{x_i})
            \S \\
            & \ \ \ \
            ((\overline{u_{i-1}}\wedge\overline{t_{i-1}}\wedge\overline{u_i}\wedge\overline{t_i}\wedge\overline{x_i})
            \S \\
            & \ \ \ \ \ \ \ \
            ((\overline{u_{i-1}}\wedge t_{i-1}\wedge\overline{u_i}\wedge t_i\wedge\overline{x_i})))
            \end{split}\end{equation*}
          \end{minipage}
          \hspace{-.02\textwidth}
          \begin{minipage}{.48\textwidth}
            \begin{equation*} \begin{split}
              \gamma^2[i] = \ & (u_{i-1}\wedge\overline{t_{i-1}}\wedge u_i\wedge\overline{t_i}\wedge x_i)
            \S \\
            & \ \ \ \
            ((\overline{u_{i-1}}\wedge\overline{t_{i-1}}\wedge\overline{u_i}\wedge\overline{t_i}\wedge x_i)
            \S \\
            & \ \ \ \ \ \ \ \
            ((\overline{u_{i-1}}\wedge t_{i-1}\wedge\overline{u_i}\wedge t_i\wedge x_i)))
            \end{split}\end{equation*}
          \end{minipage}
        \par\bigskip
        \end{small}

        The formula $\alpha$ initializes a model as follows: it sets $u_0\overline{t_0}$ in the current state and requires that in the past there is a state with $\overline{u_0}t_0$ and all states in between satisfy $\overline{u_0}\overline{t_0}$. We will use $\beta^1[i]$ and $\beta^2[i]$ for $\forall$-quantified variables $x_i$ to partition the states such that $x_i$ is true in one partition and false in the other. Finally, we need $\gamma^1[i]$ and $\gamma^2[i]$ to set the values for the $\exists$-quantified variables.

        We now define the formula $\psi'$, which constitutes the reduction.

        $$\psi'= \alpha \;
            \wedge \displaystyle \bigwedge_{i\in I_\forall}\!((\beta^1[i] \wedge \beta^2[i])\S\, t_0)\;
            \wedge \displaystyle \bigwedge_{i\in I_\exists}\!((\gamma^1[i] \vee \gamma^2[i])\S\, t_0)\;
            \wedge \,(\varphi \S\, t_0)$$

        The formula $\psi'$ as defined above is specified as a formula using the connectives \AND, \OR, and \NOT. Before proving the correctness of the reduction, we show how $\psi'$ can be rewritten using only the available connectives from $B$. Due to the prerequisites, we know that $\cS_1\subseteq\clone B$. From the complete structure of Post's lattice~\cite{bcrv03}, it follows that $\clone{B\cup\{1\}}=\cBF$. Let $B'$ denote the set $B\cup\{1\}$. Since, due to Lemma~\ref{lemma:short formulae}, conjunction, disjunction, and negation can be written as $B'$-formulae such that every relevant variable appears only once, we can rewrite $\psi'$ into a temporal $B'$-formula with the result growing only polynomially in size (and the transformation can be carried out in polynomial time). Hence we can regard $\psi'$ as a temporal $B'$-formula. Now, since $\clone B\supseteq\cS_1$, and the \AND-function is an element of $\cS_1$, there is a $B$-formula $\mathtext{and}_B(x,y)$ which is equivalent to $x\wedge y$ (but both $x$ and $y$ might occur more than once in $\mathtext{and}_B(x,y)$). Now consider the propositional conjunctions of up to $5$ literals occurring in the subformulae $\beta^j[i]$, $\gamma^j[i]$, and $\alpha$ of $\psi'$, and recall that in the above step, we have rewritten these into formulae that only use connectives from $B$ and the constant $1$. For each such conjunction $\psi_{\mathtext{lit}}$, let $\psi^t_{\mathtext{lit}}$ be the formula obtained from $\psi_{\mathtext{lit}}$ by exchanging each occurrence of the constant $1$ with the new variable $t$. Now the formula $\mathtext{and}_B(t,\psi^t_{\mathtext{lit}})$ is equivalent to $\psi_{\mathtext{lit}}\wedge t$. We can therefore replace all formulae $\psi_{\mathtext{lit}}$ with $\mathtext{and}_B(t,\psi^t_{\mathtext{lit}})$, and obtain a formula which is equivalent to $\psi'$, but additionally forces the new variable $t$ to true in all the affected states. The remaining conjunctions occurring in the subformula $\alpha$ can simply be rewritten using the $\mathtext{and}_B(x,y)$-formula---there is only a constant number of these, hence this rewriting can be done in polynomial time.

        It remains to deal with conjunctions on the outmost level of $\psi'$, i.e., the three conjunctions connecting the different parts of the formula and the conjunctions over all ${i\in I_\forall}$ and ${i\in I_\exists}$. We first re-arrange these conjunctions as a formula which is a binary tree of logarithmic depth. Then each conjunction can be replaced by using the formula $\mathtext{and}_B(x,y)$ defined above. Since the nesting degree of the conjunction (and hence of applications of $\mathtext{and}_B(x,y)$) is only logarithmic, this transformation leads to a formula which is polynomial in the length of the original representation of $\psi'$, and can be carried out in polynomial time.

        The result of these transformations is a temporal $B$-formula which is equivalent to $\psi'$, apart from forcing the newly-introduced variable $t$ to true in all worlds in all models of $\psi'$ that lie in the scope of the relevant temporal operators. In particular, this formula is satisfiability-equivalent to $\psi'$. Hence it suffices to prove that the reduction is correct with respect to $\psi'$, i.e., that $\psi'$ is satisfiable if and only if the original $\mathtext{QBF}$-instance $\psi$ evaluates to true. For this, we first give a characterization of models satisfying $\psi'$, which establishes that models for this formula are indeed ``flat versions of quantifier-trees.''

        Hence assume that $S$ is a structure that satisfies $\psi'$ in a state $s_m$. We prove by induction over $n$ that there are natural numbers $0=a_0<\dots<a_{3(2^k)}\leq m+1$ and for every $q\in I_{\exists}$ a function $\sigma_q:\{0,1\}^{q-1}\rightarrow\{0,1\}$ such that $S$ satisfies the following property: if $m-a_i<j\leq m-a_{i-1}$, then it holds for all $h$ that

        \begin{enumerate}[(1)]
          \item\label{xp} $S,s_j\vDash x_{p_h}$  iff  $\lceil\frac{i}{3(2^{k-h})}\rceil$ is even,
          \item\label{xq} $S,s_j\vDash x_{q_h}$  iff  $\sigma_{q_h}(a_1\dots,a_{q_h-1})=1$ where $a_d=1$ if $x_d\in\xi(s_j)$ and $a_d=0$ otherwise,
          \item $S,s_j\vDash t_0$      iff  $i=3(2^k)$,
          \item $S,s_j\vDash t_{p_h}$  iff  $i=c\cdot 3(2^{k-h})$ for some $c\in\mathbb{N}$,\
          \item $S,s_j\vDash t_{q_h}$  iff  $S,s_j\vDash t_{p_h-1}$,
          \item $S,s_j\vDash u_0$      iff  $i=1$,
          \item $S,s_j\vDash u_{p_h}$  iff  $i=c\cdot 3(2^{k-h})+1$ for some $c\in\mathbb{N}$,
          \item $S,s_j\vDash u_{q_h}$  iff  $S,s_j\vDash u_{p_h-1}$.
        \end{enumerate}

	Note that due to point \ref{xp} for every possible assignment $\pi$ to $\{x_{p_1},\dots,x_{p_k}\}$ there is a $j\in\{m\!-\!a_{3(2^k)}\!+\!1,\dots,m\}$ such that $S,s_j\vDash x_{p_i}$ if and only if $\pi(x_{p_i})=1$. This is the main feature of the construction. The other variables $t_i$ and $u_i$ are necessary to ensure this condition. Figure~\ref{figure: structure} depicts the buildup of structures resulting from these eight properties. The states shown are necessary in a model for $\psi'$, however there can be more states in between but those have the same assignment as one of the displayed states. The assignment for the $\forall$-quantified variables $x_{p_1},\dots,x_{p_k}$ is given for all states and one can see that all possible assignments are present. Assignments to the $\exists$-quantified variables are not displayed because they can differ from structure to structure. The variables $u_i,t_i$ label all states which set them to true.

\begin{figure}
\begin{center}
  \ifpdf
    \includegraphics{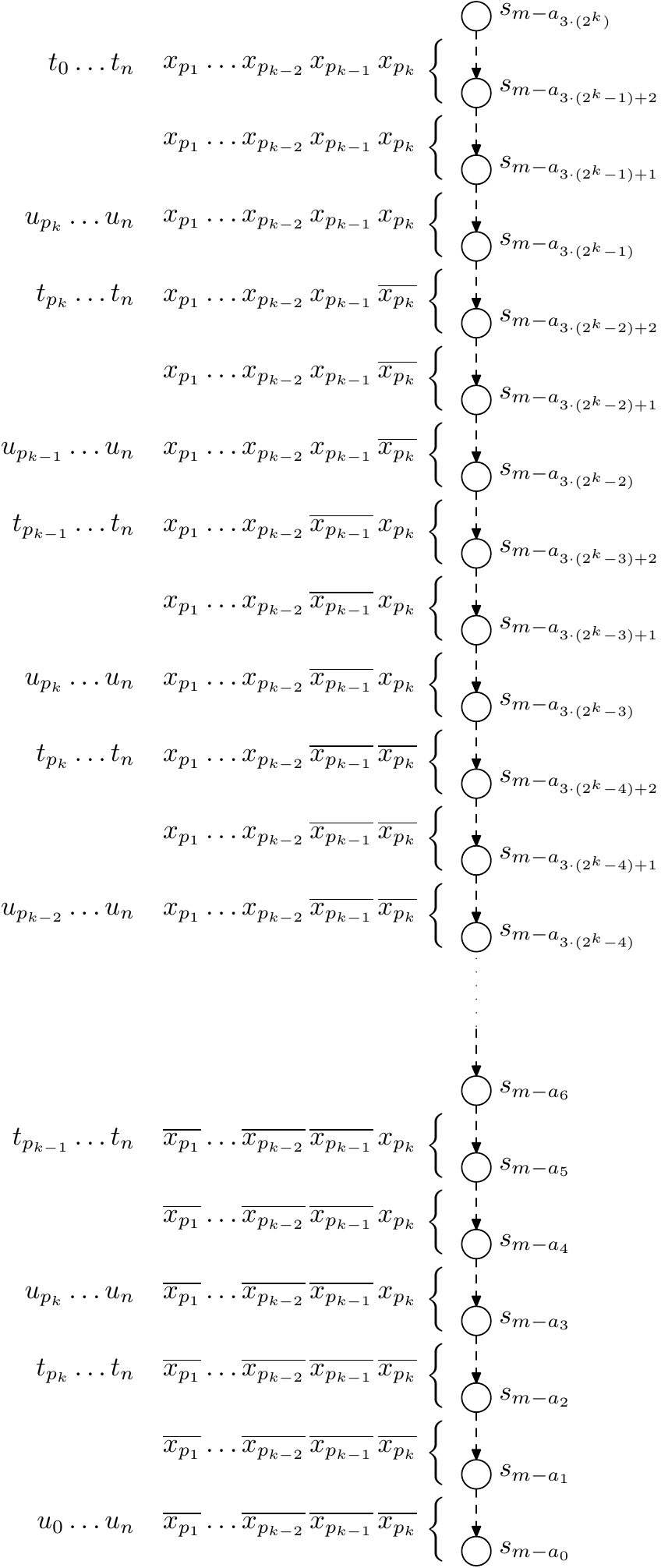}
  \else
    \includegraphics{Graphics/grafik_ilka.1}
  \fi
\caption{Structure of models of $\psi'$ in the proof of Theorem~\ref{theorem:SAT(S;BF) SAT(S|U;S1) PSPACE-h}}\label{figure: structure}
\end{center}
\end{figure}

        For $n=0$ it holds that $\psi'=\alpha\wedge(\varphi\S \,t_0)$. Since $\alpha$ satisfies the prerequisites of the claim above, there exist natural numbers $0=a_0<a_1<a_2<a_3\leq m+1$ such that
        \begin{enumerate}[$\bullet$]
          \item $m-a_1< j\leq m-a_0$ implies $S,s_j\vDash u_0\wedge\overline{t_0}$
          \item $m-a_2< j\leq m-a_1$ implies $S,s_j\vDash \overline{u_0}\wedge\overline{t_0}$
	  \item $m-a_3< j\leq m-a_2$ implies $S,s_j\vDash \overline{u_0}\wedge t_0$
        \end{enumerate}
        The only occurring variables are $u_0$ and $t_0$ and it is easy to see that the above property of $S$ holds for both.

        For the induction step assume that $n>1$ and the claim holds for $n-1$. There are two cases to consider:

        \begin{EcoEnum}
          \item[\textbf{Case 1:}] $Q_n=\forall$. That means
            \begin{equation*}\begin{split}
              \psi'=& \ \alpha\wedge\bigwedge_{i\in I_{\forall}\setminus\{n\}}((\beta^1[i]\wedge\beta^2[i])\S\, t_0)\wedge \bigwedge_{i\in I_{\exists}}((\gamma^1[i]\vee\gamma^2[i])\S\, t_0)\wedge (\varphi \S\, t_0) \\ &\wedge ((\beta^1[n]\wedge\beta^2[n]) \S\, t_0)
            \end{split}\end{equation*}
            It follows that there are natural numbers $0=a_0<\dots<a_{3(2^{k-1})}\leq m+1$ and for every $q\in I_{\exists}$ a function $\sigma_q:\{0,1\}^{q-1}\rightarrow\{0,1\}$ such that $S$ fulfills the properties of the claim (note that the subformula $(\varphi\S \,t_0)$ is not necessary for our argument). Since $S,s_m\vDash (\beta^1[n]\wedge\beta^2[n]) \S \,t_0$ and for $m-a_{3(2^{k-1})}< j\leq m$ it holds that $S,s_j\vDash t_0$ if and only if $j\leq m-a_{3(2^{k-1})-1}$, we have $S,s_j\vDash \beta^1[n]\wedge\beta^2[n]$ for every $m-a_{3(2^{k-1})-1}< j\leq m$. Let $i=c\cdot3$ for some $c\in\mathbb{N}$, then it holds that $m-a_{i+1}<j\leq m-a_{i}$ implies $S,s_j\vDash u_{n-1}$ which means that for these states $s_j$ it holds that $S,s_j\vDash u_{n-1}\wedge \overline{t_{n-1}}\wedge u_n\wedge \overline{t_n} \wedge x_n$. Due to our proposition there are natural numbers $0=b^i_0<b^i_1<\dots<b^i_6\leq a_i+1$ such that
            \begin{enumerate}[$\bullet$]
              \item $a_{i}-b^i_1< j\leq a_{i}-b^i_0$ implies $S,s_j\vDash u_{n-1}\wedge\overline{t_{n-1}}\wedge u_n\wedge\overline{t_n}\wedge \overline{x_n}$
              \item $a_{i}-b^i_2< j\leq a_{i}-b^i_1$ implies $S,s_j\vDash \overline{u_{n-1}}\wedge\overline{t_{n-1}}\wedge\overline{u_n}\wedge\overline{t_n}\wedge \overline{x_n}$
	      \item $a_{i}-b^i_3< j\leq a_{i}-b^i_2$ implies $S,s_j\vDash \overline{u_{n-1}}\wedge\overline{t_{n-1}}\wedge\overline{u_n}\wedge t_n\wedge \overline{x_n}$
              \item $a_{i}-b^i_4< j\leq a_{i}-b^i_3$ implies $S,s_j\vDash \overline{u_{n-1}}\wedge\overline{t_{n-1}}\wedge u_n\wedge\overline{t_n}\wedge x_n$
              \item $a_{i}-b^i_5< j\leq a_{i}-b^i_4$ implies $S,s_j\vDash \overline{u_{n-1}}\wedge\overline{t_{n-1}}\wedge\overline{u_n}\wedge\overline{t_n}\wedge x_n$
              \item $a_{i}-b^i_6< j\leq a_{i}-b^i_5$ implies $S,s_j\vDash \overline{u_{n-1}}\wedge t_{n-1}\wedge\overline{u_n}\wedge t_n\wedge x_n$
            \end{enumerate}
            The nearest state before $s_{m-a_{i}}$ that satisfies $\overline{u_{n-1}}$ is $s_{m-a_{i+1}}$ and the nearest state before $s_{m-a_{i}}$ that satisfies $t_{n-1}$ is $s_{m-a_{i+2}}$, therefore it holds that $b^i_1=a_{i+1}-a_{i}$ and $b^i_5=a_{i+2}-a_{i}$. By denoting $b^i_j+a_i$ with $c_{2i+j}$ we define natural numbers $c_0,\dots,c_{3(2^k)}$ for which it can be verified that they fulfill the claim.

          \item[\textbf{Case 2:}] $Q_n=\exists$. In this case we have
            \begin{equation*}\begin{split}
              \psi'=& \ \alpha\wedge\bigwedge_{i\in I_{\forall}}((\beta^1[i]\wedge\beta^2[i])\S\, t_0)\wedge \bigwedge_{i\in I_{\exists}\setminus\{n\}}((\gamma^1[i]\vee\gamma^2[i])\S\, t_0)\wedge (\varphi \S\, t_0) \\ &\wedge ((\gamma^1[n]\vee\gamma^2[n]) \S\, t_0).
            \end{split}\end{equation*}
          Because of the induction hypothesis there are natural numbers $0=a_0<a_1<\dots<a_{3(2^k)}\leq m+1$ such that the required properties are satisfied. Analogously to the first case $S,s_j\vDash\gamma^1[n]\vee\gamma^2[n]$ is true for every $m-a_{3(2^k)}<j\leq m$. Let $i=c\cdot3$, then for $m-a_{i+1}<j\leq m-a_i$ it holds that $S,s_j\vDash u_{n-1}\wedge\overline{t_{n-1}}\wedge u_n\wedge\overline{t_n}\wedge x_n$ or $S,s_j\vDash u_{n-1}\wedge\overline{t_{n-1}}\wedge u_n\wedge\overline{t_n}\wedge \overline{x_n}$, because $S,s_j\vDash u_{n-1}$. For $m-a_{i+2}<j\leq m-a_{i+1}$ we have that $S,s_j\vDash\overline{u_{n-1}}\wedge\overline{t_{n-1}}\wedge\overline{u_n}\wedge\overline{t_n}\wedge x_n$ or $S,s_j\vDash\overline{u_{i-n}}\wedge\overline{t_{i-n}}\wedge\overline{u_n}\wedge\overline{t_n}\wedge\overline{x_n}$ and for $m-a_{i+3}<j\leq m-a_{i+2}$ it must hold $S,s_j\vDash\overline{u_{n-1}}\wedge t_{n-1}\wedge\overline{u_n}\wedge t_n\wedge x_n$ or $S,s_j\vDash\overline{u_{n-1}}\wedge t_{n-1}\wedge\overline{u_n}\wedge t_n \wedge\overline{x_n}$. If $S,s_{a_i}\vDash\gamma^1[n]$, then in all these states $\overline{x_n}$ is satisfied; if $S,s_{a_i}\vDash\gamma^2[n]$, then $x_n$ is. Therefore with $\sigma_n$ defined by $\sigma_n(d_1,\dots, d_{n-1})=1$ if and only if $S,s_{3(d_12^{n-2}+\dots+d_{n-1}2^0)}\vDash\gamma^2[n]$, the induction is complete, because the binary numbers correspond to the assignments to the $\forall$-quantified variables.
        \end{EcoEnum}
        Note that for a structure that satisfies $\psi'$ with the above notation, $S,s_j\vDash\varphi$ holds for every $m-a_{3(2^k)}<j\leq m$, since $\varphi\S\, t_0$ is a conjunct of $\psi'$.

        Now assume that $\psi'$ is satisfiable in a state $s_m$ of a structure $S$. This is if and only if for every $q\in I_{\exists}$ there is a function $\sigma_{q}:\{0,1\}^{q-1}\rightarrow\{0,1\}$ such that $S$ fulfills the above property. Hence each possible assignment $J$ to the $\forall$-quantified variables $\{x_{p_1},\dots,x_{p_k}\}$ can be extended to an assignment to $\{x_1,\dots,x_n\}$ by $J(x_{q_i})=\sigma_{q_i}(J(x_1),\dots,J(x_{q_i-1}))$ which is equivalent to the validity of $\psi$.
        We can prove \PSPACE-hardness for $\tsat{\{\U\},B}$ with an analogous construction.\qed


  \ifreport
    In the following, we use the result from Lewis \cite{lew79} and the previously established upper
    bounds to obtain \NP-completeness results:
  \else
    The following proposition follows immediately from a result of Lewis's \cite{lew79}
    and the previously established upper bounds.
  \fi

  \begin{proposition}\label{prop:SAT(F|G|F,G|X;S1) NP-c}
    \ifreport
      Let $B$ be a finite set of Boolean functions such that $\cS_1\subseteq\clone B$.
    \else
      Let $B$ be a finite set of Boolean functions with $\cS_1\subseteq\clone B$.
    \fi
    Then \tsat{\{\F\},B}, \tsat{\{\G\},B}, \tsat{\{\F,\G\},B}, and \tsat{\{\X\},B} are \NP-complete.
  \end{proposition}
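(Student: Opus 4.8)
The plan is to handle the upper and lower bounds separately and then combine them. For membership in \NP, nothing new is needed: since $\{\F\}$, $\{\G\}$, and $\{\F,\G\}$ are all subsets of $\{\F,\G\}$, part~(2) of Lemma~\ref{lemma:PSPACE_ub_all} puts $\tsat{\{\F\},B}$, $\tsat{\{\G\},B}$, and $\tsat{\{\F,\G\},B}$ in \NP, and part~(3) of the same lemma puts $\tsat{\{\X\},B}$ in \NP.

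For \NP-hardness I would reduce from the purely propositional problem $\tsat{\emptyset,B}$, which is \NP-complete under the hypothesis $\cS_1\subseteq\clone B$ by Lewis's dichotomy~\cite{lew79}. The reduction is the identity map: every propositional $B$-formula $\varphi$ is, in particular, a temporal $B$-formula over any $M\in\{\{\F\},\{\G\},\{\F,\G\},\{\X\}\}$ that happens to contain no temporal operator. It then remains to observe that $\varphi$ is propositionally satisfiable if and only if it is satisfiable as a temporal formula. For the nontrivial direction, given a satisfying assignment $\pi$ of $\varphi$ one builds a structure $S$ in which some state $s_i$ carries exactly the assignment $\pi$; since the definition of $S,s_i\vDash\psi$ for a temporal formula $\psi$ without temporal operators depends only on $\xi(s_i)$, a straightforward induction on the structure of $\varphi$ gives $S,s_i\vDash\varphi$. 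Conversely, from any structure $S$ with $S,s_i\vDash\varphi$ one simply reads off $\xi(s_i)$ as a propositional model of $\varphi$. The identity map is trivially computable in logarithmic space, so $\tsat{\emptyset,B}\redlogm\tsat{M,B}$ for each of the four choices of $M$, whence each of these problems is \NP-hard.

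Together with the \NP\ upper bounds this yields \NP-completeness in all four cases. I do not expect any genuine obstacle here: the argument is a direct combination of Lewis's classical result with Lemma~\ref{lemma:PSPACE_ub_all}. The only point that deserves a word of care is that our satisfiability notion asks for a satisfying state somewhere in the structure rather than at the initial one; but this is irrelevant for $\S$-free fragments, as already noted, and is in any case immaterial for propositional formulae, which are insensitive to the choice of state.
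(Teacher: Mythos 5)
Your proposal is correct and follows essentially the same route as the paper: \NP-membership via parts (2) and (3) of Lemma~\ref{lemma:PSPACE_ub_all}, and \NP-hardness via the trivial (identity) reduction from $\tsat{\emptyset,B}$, which is \NP-complete by Lewis's result \cite{lew79}. Your additional remarks on why propositional satisfiability coincides with satisfiability at some state of a structure merely spell out what the paper calls ``trivial.''
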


  \ifreport
    \proof
      Trivially, it holds that $\tsat{\emptyset,B} \redlogm \tsat{M,B}$ for each set $M$ of temporal operators,
      and \tsat{\emptyset,B} is \NP-complete due to \cite{lew79}. The upper bound
   follows from Theorem~\ref{theorem:sicl85}\ref{theorem:sicl85:SAT(F;BF) NP-c} and Lemma~\ref{lemma:PSPACE_ub_all}.\qed

  \fi
\subsection{Polynomial time results}

  \ifreport
    The following theorem shows that for some sets $B$ of Boolean functions, there is a satisfying model for every
    temporal $B$-formula over any set of temporal operators. These are the cases where $B\subseteq\cR_1,$ or
    $B\subseteq\cD.$ In the first case, every propositional formula over these operators is satisfied by the assignment
    giving the value $1$ to all appearing variables. In the second case, every propositional $B$-formula describes
    a self-dual function. For such a formula it holds in particular that if it is not satisfied by the all-zero
    assignment, then it is satisfied by the all-one assignment. Hence, such formulae are always satisfiable. It is easy
    to see that this is also true for temporal formulae involving these propositional operators.
  \else
    This subsection lists all cases for which LTL satisfiability can be decided in polynomial time.
    Due to the limitations of space, the proofs are omitted and can be found in the
    report~\cite{bsssv06}.

    As Theorem~\ref{theorem:SAT(.;R1|D) trivial} shows, for some sets $B$ of Boolean functions, there is a
    satisfying model for every temporal $B$-formula over any set of temporal operators.
  \fi

  \begin{theorem}\label{theorem:SAT(.;R1|D) trivial}
    \ifreport
      ~\par
      \begin{enumerate}[\em(1)]
        \item
          \label{it:SAT(.;R1) trivial}
          Let $B$ be a finite subset of $\cR_1$.
          Then every formula $\varphi$ from $\LL{\{\F,\!\!\;\G,\!\!\;\X,\!\!\;\U,\!\!\;\S\},B}$ is satisfiable.
        \item
          \label{it:SAT(.;D) trivial}
          Let $B$ be a finite subset of $\rm D$.
          Then every formula $\varphi$ from $\LL{\{\F,\G,\X,\U,\S\},B}$ is satisfiable.
      \end{enumerate}
    \else
      Let $B$ be a finite subset of $\cR_1$ or $\rm D$.
      Then every formula $\varphi$ from $\LL{\{\F,\!\!\;\G,\!\!\;\X,\!\!\;\U,\!\!\;\S\},B}$ is satisfiable.
    \fi
  \end{theorem}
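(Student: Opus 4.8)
The plan is to treat the two cases $B \subseteq \cR_1$ and $B \subseteq \cD$ separately, in each case exhibiting an explicit satisfying structure, and in both cases the key will be a ``constant'' structure in which every state has the same propositional assignment. First I would formalize the following easy observation: if $S = (s,V,\xi)$ is a structure in which $\xi(s_i) = \xi(s_j)$ for all $i,j$ --- call such a structure \emph{homogeneous} with value $A \subseteq V$ --- then for every temporal $\{\wedge,\neg\}$-formula $\varphi$ over $\{\X,\F,\G,\U,\S\}$, whether $S,s_i \vDash \varphi$ holds is independent of $i$ and coincides with the value of the purely propositional evaluation of $\varphi$ under the assignment $A$, where every temporal operator acts as the identity. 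This is proved by a straightforward induction on the structure of $\varphi$ using the semantic clauses in the Preliminaries: for $\X$ the successor state has the same assignment; for $\U$ one may take $k=i$, so $\varphi_1 \U \varphi_2$ reduces to $\varphi_2$; symmetrically for $\S$; and $\F,\G$ follow since they are abbreviations. Since every Boolean operator is composed from $\wedge$ and $\neg$, this extends to temporal $B$-formulae for arbitrary finite $B$.

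For case (1), let $B \subseteq \cR_1$ and let $\varphi \in \LL{\{\F,\G,\X,\U,\S\},B}$ with variable set $V_\varphi$. Take $S$ to be the homogeneous structure over $V_\varphi$ with value $A = V_\varphi$, i.e.\ every variable is true in every state. By the observation above, $S,s_0 \vDash \varphi$ iff $\varphi$ evaluated propositionally under the all-ones assignment (with temporal operators as identities) is $1$. But every connector in $\varphi$ comes from $B \subseteq \cR_1$, hence is $1$-reproducing, so an immediate sub-induction shows every subformula evaluates to $1$ under the all-ones assignment. Therefore $S,s_0 \vDash \varphi$ and $\varphi$ is satisfiable.

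For case (2), let $B \subseteq \cD$, so every connector of $\varphi$ is self-dual, and hence an induction (using $\dual(f)(x_1,\dots,x_n) = \neg f(\neg x_1,\dots,\neg x_n)$ and that self-duality is preserved under composition) shows that the propositional function $g$ computed by $\varphi$ under the identity interpretation of temporal operators is itself self-dual. Now consider the two homogeneous structures $S_0$ (all variables false) and $S_1$ (all variables true). By the observation, $S_0,s_0 \vDash \varphi$ iff $g(0,\dots,0) = 1$, and $S_1,s_0 \vDash \varphi$ iff $g(1,\dots,1) = 1$. Self-duality gives $g(1,\dots,1) = \neg g(0,\dots,0)$, so exactly one of these two values is $1$; pick the corresponding structure, and $\varphi$ is satisfied there. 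I expect the only mildly delicate point to be the bookkeeping in the base observation --- in particular getting the $\U$ and $\S$ clauses right when $k=i$ --- but this is routine; there is no real obstacle, since homogeneous models collapse LTL to propositional logic.
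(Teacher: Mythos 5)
Your proof is correct and follows essentially the same route as the paper: the all-ones homogeneous structure for $B\subseteq\cR_1$, and the pair of all-ones/all-zeros homogeneous structures together with the complementation property for $B\subseteq\cD$. The only difference is presentational --- you factor out a general ``homogeneous structures collapse LTL to propositional evaluation'' lemma and then invoke closure of self-duality under composition, whereas the paper folds both steps into a single induction on the formula establishing $S^1,s^1\vDash\varphi$ iff $S^0,s^0\nvDash\varphi$ directly.
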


  \ifreport
    \proof
      ~\par
      \begin{enumerate}[(1)]
        \item
          Since $\rm R_1$ is the class of 1-reproducing Boolean functions, any
          $\psi\in \rm R_1$ is true under the assignment that makes every propositional
          variable in $\psi$ true.
          If we apply this fact to formulae $\varphi \in \LL{\{\F,\G,\X,\U,\S\},B}$,
          then it is easy to see that any such formula $\varphi$ is true in
          every state of a structure $S_\varphi$ where the assignment of every state is $V_\varphi$.
        \item
        %
        We show by induction on the operators that this holds for all formulae. Let
        $S^1$ ($S^0$) denote the structure where the assignment of every
        state is $V_\varphi$ ($\emptyset$, resp.) and let $s^1$ ($s^0$, resp.) be the first state. We claim that
        $\varphi \in \LL{\{\F,\G,\X,\U,\S\},B}$ is satisfied by $S^1$ iff $\varphi$ is not satisfied
        by $S^0$. If $\varphi$ is purely propositional the claim holds trivially. We now have
        to look at the following cases:
        \begin{enumerate}[$\bullet$]
        \item
          $\varphi=\F \varphi_1$: Assume the claim holds for $\varphi_1$. Since for all states $s$ in $S^1$ the
          submodel starting at $s$ are isomorphic, obviously $\varphi$ is satisfied by $S^1$ iff $\F \varphi$ is
          satisfied by $S^1$ and the same argument also holds for $S^0$. Thus $S^0,s^0 \nvDash \varphi$ iff
          $S^1,s^1 \vDash \varphi$.
        \item
          $\varphi=\G \varphi_1$: This works analogously to $\F$.
        \item
          $\varphi=\X \varphi_1$: This also works analogously to $\F$.
        \item
          $\varphi=\varphi_1 \U \varphi_2$: Assume the claim holds for $\varphi_2$. Then
          $S^0,s^0 \nvDash \varphi$ iff $S^0,s^0 \nvDash \varphi_2$ iff $S^1,s^1 \vDash \varphi_2$
          iff $S^1,s^1 \vDash \varphi$.
        \item
          $\varphi=\varphi_1 \S \varphi_2$: This works analogously to $\U$.
        \item
          $\varphi=f(\varphi_1,\ldots,\varphi_n)$, such that $f$ is a self-dual function from $B$: Assume the claim holds
          for $\varphi_i$, $1\leq i \leq n$, \ie, $S^1,s^1\vDash \varphi_i$ iff
          $S^0,s^0\nvDash \varphi_i$. Then $S^1,s^1 \nvDash f(\varphi_1,\ldots,\varphi_n)$ implies
          $S^0,s^0 \vDash f(\varphi_1,\ldots,\varphi_n)$ and $S^1,s^1 \vDash f(\varphi_1,\ldots,\varphi_n)$ implies
          $S^0,s^0 \nvDash f(\varphi_1,\ldots,\varphi_n)$.\qed
        \end{enumerate}
      \end{enumerate}
  \fi

  \ifreport
    The following two theorems prove that satisfiability for formulae with any combination of
    modal operators, but only very restricted Boolean operators (\ie, negation and constants in
    the first case and only disjunction, conjunction, and constants in the second case), is always easy to decide.
  \else
    Due to Theorem~\ref{theorem:SAT(.;N) in P}, satisfiability for formulae with any combination of
    modal operators, but only very restricted Boolean operators is always easy to decide.
  \fi

  \ifreport
    \begin{theorem}\label{theorem:SAT(.;N) in P}
      Let $B$ be a finite subset of $\rm N$. Then $\tsat{\{\F,\G,\X,\U,\S\},B}$ can be decided in polynomial time.
    \end{theorem}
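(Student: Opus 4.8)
The plan is to reduce $\tsat{\{\F,\G,\X,\U,\S\},B}$ to a purely syntactic test on the parse tree. Since $B\subseteq\mathrm N$, every $f\in B$ depends on at most one argument; fixing the remaining arguments shows that $f$ is either a constant, a projection $\proj{n}{j}$, or a negated projection $\neg\proj{n}{j}$. Hence a temporal $B$-formula over $\{\F,\G,\X,\U,\S\}$ is, up to size‑non‑increasing rewriting of each $B$-connective, built from variables and the constants $0,1$ using $\neg$ and the temporal operators only.

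The core is a classification of such formulae into three mutually exclusive types, computed by a bottom-up pass over the parse tree. Call a formula $\psi$ \emph{positively realizable} if some structure satisfies $\psi$ in every state, and \emph{negatively realizable} if some structure satisfies $\psi$ in no state; say $\psi$ has type~(i) if it is valid, type~(ii) if it is unsatisfiable, and type~(iii) if it is both positively and negatively realizable. Because structures are nonempty, the three types are pairwise disjoint (a valid formula is positively but not negatively realizable, an unsatisfiable one negatively but not positively realizable). The claim is that \emph{every} temporal $B$-formula has exactly one of these types, which I would prove by induction on the formula using the propagation rules: $0$ has type~(ii), $1$ has type~(i), a variable has type~(iii); $\neg$ exchanges types~(i) and~(ii) and fixes type~(iii); each of $\X,\F,\G$ preserves the type; and $\varphi_1\U\varphi_2$ and $\varphi_1\S\varphi_2$ have the same type as $\varphi_2$, independently of $\varphi_1$. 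Handling $B$-connectives directly is then immediate (a constant gives type~(i) or~(ii); a projection copies the type of the selected argument; a negated projection applies the $\neg$-rule).

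The temporal cases are where the (short) work lies. For $\X$: in a structure witnessing positive (resp. negative) realizability of $\psi$, the formula $\X\psi$ again holds in every (resp. no) state; the same idea handles $\F$ and $\G$, using that $\F\psi$ holds in no state of a structure in which $\psi$ holds in no state. For $\U$ (and symmetrically $\S$): if $\varphi_2$ is valid or unsatisfiable, so is $\varphi_1\U\varphi_2$ via the witness $k=i$; if $\varphi_2$ is positively realizable via a structure $S$, then $k=i$ shows $\varphi_1\U\varphi_2$ holds in every state of $S$ (the constraint on $\varphi_1$ over the empty interval $[i,i)$ is vacuous); and if $\varphi_2$ is negatively realizable via a structure $S'$, then $\varphi_1\U\varphi_2$ holds in no state of $S'$, since no state ahead satisfies $\varphi_2$. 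I expect this $\U$/$\S$ step to be the only mildly delicate point, and in particular the choice of the ``holds in every/no state'' invariant — which is precisely what lets the induction close without tracking the past/future context of subformulae.

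Finally, $\varphi$ is satisfiable iff it is of type~(i) or~(iii), i.e.\ iff it is not of type~(ii), since these three types are exhaustive and type~(ii) means unsatisfiable. The type of each parse-tree node is obtained in constant time from those of its children (with $B$ fixed, one precomputes for each $f\in B$ whether it is a constant, a projection, or a negated projection), so the whole test runs in time linear in the length of $\varphi$. Hence $\tsat{\{\F,\G,\X,\U,\S\},B}\in\PTIME$. (One may, if preferred, first rewrite $\varphi$ into a temporal $\{\neg,0,1\}$-formula of linear size and then run the classification, but the direct treatment above makes even that step unnecessary.)
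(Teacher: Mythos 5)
Your proof is correct and follows essentially the same strategy as the paper's: a linear-time, bottom-up three-way classification of each subformula (valid / unsatisfiable / neither), with exactly the same propagation rules, including the key observation that the type of $\varphi_1\U\varphi_2$ and $\varphi_1\S\varphi_2$ depends only on $\varphi_2$ via the witness $k=i$. The one genuine difference is how the third case is certified so that the induction closes: the paper shows that a non-constant formula is equivalent to a constant-free temporal $\neg$-formula and then invokes the self-duality argument from the proof of Theorem~\ref{theorem:SAT(.;R1|D) trivial}(2) (such a formula is true on the all-ones structure iff false on the all-zeros structure), whereas you carry the explicit invariant ``there is a structure satisfying $\psi$ in every state and a structure satisfying $\psi$ in no state.'' Your invariant is self-contained and slightly cleaner to push through $\X$, $\F$, $\G$, $\U$, and $\S$, at no cost in generality; the paper's version has the small side benefit of exhibiting the two canonical witnessing structures (all variables true / all variables false) once and for all. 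Either way the algorithm and its correctness argument are sound.
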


    \proof
      We give a recursive polynomial-time algorithm deciding the following question: Given a
      formula $\varphi$ built from propositional negation, constants, variables and arbitrary
      temporal operators, which of the following three cases occurs: $\varphi$ is unsatisfiable, $\varphi$ is a tautology, or $\varphi$ is not equivalent
      to a constant function. We also show that in the latter case, $\varphi$ is equivalent to a formula using only the above operators in which no constant appears. We will call these formulae temporal $\neg$-formulae.

      We give inductive criteria for these cases. Obviously, a constant $c$ is constant, and a variable is not, and can
      be written in the way defined above. The formula $\neg\varphi$ is equivalent to the constant $c$ if and only if
      $\varphi$ is equivalent to $\neg c$, otherwise it is equivalent to a temporal $\neg$-formula. If $\varphi=\F\varphi_1$,
      $\varphi=\G\varphi_1$, or $\varphi=\X\varphi_1$, then $\varphi$ is equivalent to a constant $c$ if and only if $\varphi_1$ is equivalent to $c:$ Obviously $\F c\equiv \G c\equiv \X c\equiv c$ for a constant. On the other hand, if $\varphi_1$ is not equivalent to a constant, then due to induction, it is equivalent to a temporal $\neg$-formula. Hence, $\F\varphi_1,$ $\G\varphi_1$ and $\X\varphi_1$ are equivalent to temporal $\neg$-formulae as well, and due to the proof of Theorem~\ref{theorem:SAT(.;R1|D) trivial}.~\ref{it:SAT(.;D) trivial}, these formulae are not equivalent to constants. Hence, if $\varphi_1$ is not equivalent to a constant, then $\varphi$ is not equivalent to a constant either, and can be written as a temporal $\neg$-formula.

      Now, let $\varphi=\varphi_1\U\varphi_2$. If $\varphi_2$ is a tautology, \ie, equivalent to the constant $1$, then, by
      the definition of $\U$, $\varphi$ is a tautology as well. Similarly, if $\varphi_2$ is equivalent to the constant $0$,
      then so is $\varphi$. Now assume that $\varphi_2$ is not constant. Then, by induction, $\varphi_2$ is equivalent to a
      temporal $\neg$-formula. If $\varphi_1$ is equivalent to the constant $0$, then $\varphi_1\U\varphi_2$ is equivalent
      to $\varphi_2$, and if $\varphi_1$ is equivalent to $1$, then $\varphi_1\U\varphi_2$ is equivalent to $\F\varphi_2$.
      If $\varphi_1$ is not equivalent to a constant, then, by induction, it can be written as a temporal $\neg$-formula, and
      obviously, this also holds for $\varphi_1\U\varphi_2$. Again due to the proof of
      Theorem~\ref{theorem:SAT(.;R1|D) trivial}.~\ref{it:SAT(.;D) trivial}, it follows that the entire formula $\varphi$ is not equivalent to a constant.

      For the operator $\S$, a similar argument can be made: Consider the formula $\varphi_1\S\varphi_2$. If $\varphi_2$ is
      a constant, then obviously the formula $\varphi_1\S\varphi_2$ is equivalent to the same constant. If $\varphi_1$ is
      the constant $0$, then $\varphi_1\S\varphi_2$ is equivalent to $\varphi_2$, and if $\varphi_1$ is the constant $1$, then
      $\varphi_1\S\varphi_2$ is equivalent to ``$\varphi_2$ was true at one point in the past.'' If $\varphi_2$ is not a
      constant, then this is equivalent to $\neg\varphi_2\S\varphi_2$, and thus this can be written as a temporal $\neg$-formula
      as well. As above, this formula is not equivalent to a constant. Now if both $\varphi_1$ and $\varphi_2$ are not
      equivalent to a constant function, then, by induction, both can be written as temporal $\neg$-formulae, and then
      $\varphi_1\S\varphi_2$ can be written as such a formula as well. In particular, with another application of the proof for Theorem~\ref{theorem:SAT(.;R1|D) trivial}~\ref{it:SAT(.;D) trivial}, $\varphi_1\S\varphi_2$ is not equivalent to a constant.

      This gives us a recursive algorithm deciding whether $\varphi$ is a constant, and if it is, which constant is equivalent to $\varphi$. The polynomial-time computable function $A_N$ is defined as follows: On input
      $\varphi$, $A_N(\varphi)=c\in\set{0,1}$ if $\varphi$ is equivalent to the constant $c$, and $A_N(\varphi)$ is the symbol
      {\ttfamily NOCONSTANT} if $\varphi$ is not equivalent to a constant.

      The function can be computed as follows: $A_N(c)$ is defined as $c$. For a variable $x$, $A_N(x)$ is the symbol
      {\ttfamily NOCONSTANT}. On input $\X\varphi$, $\G\varphi$, or $\F\varphi$, the algorithm returns $A_N(\varphi)$.
      On input $\varphi_1\U\varphi_2$, if $\varphi_2$ is a constant $c$, then $A_N(\varphi_1\U\varphi_2)=c$. Otherwise,
      if $\varphi_1$ is equivalent to $0$, then return $A_N(\varphi_2)$, and if $\varphi_1$ is equivalent to $1$, return
      $A_N(\F\varphi_2)$. If neither $\varphi_1$ nor $\varphi_2$ are constant, then return the symbol {\ttfamily NOCONSTANT}. Similarly, on input $\varphi_1\S\varphi_2$, if $\varphi_2$ is a constant $c$, then
      $A_N(\varphi_1\S\varphi_2)=c$. Otherwise, if $\varphi_1$ is the constant $0$, then
      $A_N(\varphi_1\S\varphi_2)=A_N(\varphi_2)$, and if $\varphi_1$ is the constant $1$, and $\varphi_2$ is not a constant,
      then $A_N(\varphi_1\S\varphi_2)$ is defined as the symbol {\ttfamily NOCONSTANT}. If $\varphi_1$ and $\varphi_2$ both are not a constant, then $A_N(\varphi_1\S\varphi_2)$ is again defined as the symbol {\ttfamily NOCONSTANT}.
      The function $A_N$ can obviously be computed in polynomial time, since there is at most
      one recursive call for each operator symbol in $\varphi.$

      By the argument above, this algorithm correctly determines if $\varphi$ is equivalent to the constant $0$ or the
      constant $1$. In particular, it determines if a given formula is satisfiable.\qed
      %

    \begin{theorem}\label{theorem:SAT(.;M) in P}
      Let $B$ be a finite subset of $\rm M$. Then $\tsat{\{\F,\G,\X,\U,\S\},B}$ can be decided in polynomial time.
    \end{theorem}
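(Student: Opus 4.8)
The plan is to give a direct linear-time decision procedure, with no reduction to another satisfiability problem. The idea is to evaluate the input formula $\varphi$ in the single ``largest'' structure $S^\top$, defined over an arbitrary state sequence $(s_i)_{i\in\mathbb N}$ by $\xi(s_i)=V_\varphi$ for every $i$ --- the structure in which every variable of $\varphi$ holds in every state. The first step is a monotonicity lemma: whenever $S=(s,V,\xi)$ and $S'=(s,V,\xi')$ share the same state sequence and $\xi(s_i)\subseteq\xi'(s_i)$ for all $i$, then $S,s_i\vDash\psi$ implies $S',s_i\vDash\psi$, for every temporal $B$-formula $\psi$ (recall $B\subseteq\mathtext{M}$) and every $i$. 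This is a routine structural induction: the variable case is immediate; for $\psi=f(\psi_1,\dots,\psi_n)$ with $f\in B$ one uses that $f$ is monotone; and for $\psi$ of the form $\X\psi_1$, $\F\psi_1$, $\G\psi_1$, $\psi_1\U\psi_2$, or $\psi_1\S\psi_2$ (the temporal operators read with their standard clauses ``$\psi_1$ holds at the next state / at some future state / at every future state'', etc.), the claim is immediate from the defining clause, because enlarging the sets of states satisfying the arguments can only enlarge the set of states satisfying the compound formula. The point to keep in mind is that, since $B\subseteq\mathtext{M}$, no negation ever appears in the syntax tree of $\varphi$, so monotonicity is never spoiled.

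Next I would show that in $S^\top$ the truth value of every subformula is independent of the state. By induction on $\psi$, there is a bit $b(\psi)\in\{0,1\}$ with $S^\top,s_i\vDash\psi$ iff $b(\psi)=1$ for all $i$, where $b(x)=1$ for a variable $x$, $b(f(\psi_1,\dots,\psi_n))=f(b(\psi_1),\dots,b(\psi_n))$ for $f\in B$, $b(\X\psi_1)=b(\F\psi_1)=b(\G\psi_1)=b(\psi_1)$, and $b(\psi_1\U\psi_2)=b(\psi_1\S\psi_2)=b(\psi_2)$. Only the $\U$- and $\S$-clauses need a moment's care: if $b(\psi_2)=1$ then $\psi_1\S\psi_2$ (resp.\ $\psi_1\U\psi_2$) holds at every state, via the witness $k=i$, the ``in-between'' condition being vacuous; and if $b(\psi_2)=0$ no witness exists anywhere, so the formula fails everywhere. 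Combining the two steps, $\varphi$ is satisfiable if and only if $b(\varphi)=1$: if $b(\varphi)=1$ then $S^\top$ witnesses satisfiability, and if $S,s_m\vDash\varphi$ for some structure $S$, then $S^\top$ taken over the same state sequence dominates $S$ pointwise on the variables of $\varphi$, so $S^\top,s_m\vDash\varphi$ by monotonicity, whence $b(\varphi)=1$ by the constancy property.

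Computing $b(\varphi)$ is then a single bottom-up pass over $\varphi$: each node returns $1$ (variable), copies a child's bit ($\X,\F,\G$) or the second child's bit ($\U,\S$), or applies one of the fixed, constant-size truth tables of the functions in $B$. This runs in time linear in the length of $\varphi$, which gives $\tsat{\{\F,\G,\X,\U,\S\},B}\in\PTIME$. I do not expect a genuine obstacle in this argument; the one place that warrants attention is the verification of the $b$-clauses for $\U$ and especially $\S$ in $S^\top$ --- checking that the past operator $\S$, whose semantics reaches into the finite history at the start of the sequence, nevertheless yields a position-independent truth value once its arguments do.
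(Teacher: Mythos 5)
Your proof is correct and rests on the same core idea as the paper's: for $B\subseteq\mathtext{M}$ the structure making every variable true in every state is a universal model, and the truth of each subformula there is state-independent and computable by a single bottom-up pass --- the paper packages exactly this as an iterated rewriting algorithm (\textsc{LTL-M-Sat}) whose correctness proof is precisely your evaluation in $S^\top$. Your explicit monotonicity lemma plays the role of the paper's remark that the rewrite rules preserve satisfiability, and your handling of the $\U$/$\S$ clauses via the witness $k=i$ is sound, so there is no gap.
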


    \proof
      Remember that $\rm M$ is the clone of all monotone functions. 
      Let $\varphi$ be an arbitrary formula from \LL{\{\F,\G,\X,\U,\S\},B}. The following algorithm decides whether $\varphi$ is satisfiable.

      \newcommand{\MSAT}{\textsc{LTL-M-Sat}}
      \medskip\noindent
      \textbf{Algorithm} \MSAT
      \begin{algorithmic}
        \REPEAT
        \STATE{Replace all propositional sub-formulae that are unsatisfiable by 0}
        \STATE{Replace all sub-formulae $\F0$, $\G0$, $\X0$, $\psi\U 0$, $\psi\S 0$ by 0}
        \STATE{Replace all sub-formulae $0\U \psi$, $0\S \psi$ by $\psi$}
        \STATE{Replace all sub-formulae $\psi(\varphi_1,\dots,\varphi_k)$ by 0 if $\psi\in B$ and $\psi(\varphi'_1,\dots,\varphi'_k)$,
              where $\varphi'_i = 0$ if $\varphi_i=0$ and $\varphi'_i=1$ otherwise, is not true}
        \UNTIL{there are no changes anymore}
        \IF{$\varphi=0$}
        \STATE{{\bf return} ``unsatisfiable''}
        \ELSE
        \STATE{{\bf return} ``satisfiable''}
        \ENDIF
      \end{algorithmic}

      Since checking satisfiability of propositional $B$-formulae is in P (a $B$-formula $\varphi$ is satisfiable
      iff $\varphi(1,\dots,1)=1$) and there are at most as many replacements as there are operators in $\varphi$, \MSAT\
      runs in polynomial time.

      We prove that \MSAT\ is correct. If $\varphi$ is satisfiable, then \MSAT\ returns ``satisfiable.'' This is because
      all replacements in \MSAT\ do not affect satisfiability, so it follows that every formula \MSAT\ decides to be unsatisfiable
      is unsatisfiable. For the converse direction, let  $\varphi\in \LL{\{\F,\G,\X,\U,\S\},B}$ be such that \MSAT\ returns
      ``satisfiable'' and let $\varphi'$ be the formula generated by \MSAT\ in its REPEAT loop. We show by induction on the
      structure of $\varphi$ that $S,s_0 \vDash \varphi$, where $S=(s,V_{\varphi},\xi)$ is the structure in which every variable is true in every state, \ie, $\xi(s_i)=V_{\varphi}$ for every $i\in\mathbb{N}.$
      \begin{enumerate}[(1)]
      \item
        If $\varphi$ is a variable, it is satisfied in $S,s_0$ trivially.
      \item
        If $\varphi=\F \psi$ for a formula $\psi\in \LL{\{\F,\G,\X,\U,\S\},B}$, let $\psi'$ be the formula generated in the
        REPEAT loop when performing \MSAT\ on $\psi$. Assume that $\psi'=0$. Since every subformula replaced in $\psi$ by
        \MSAT\ will be replaced in $\varphi$, too, it holds that $\F \psi$ will be replaced by $\F0$ and that will be replaced
        by 0. It follows that $\varphi'=0$, but then \MSAT\ would return ``unsatisfiable.'' Thus, $\psi'\neq 0$, that means
        \MSAT\ returns ``satisfiable'' when performed on $\psi$. By induction it follows that $S,s_0 \vDash \psi$ and
        therefore $S,s_0 \vDash \varphi$ holds as well.
      \item
        If $\varphi=\G \psi$ for a formula $\psi\in \LL{\{\F,\G,\X,\U,\S\},B}$, we can use exactly the same arguments as in 2.
      \item
        If $\varphi=\X \psi$ for a formula $\psi\in \LL{\{\F,\G,\X,\U,\S\},B}$, we can use the same arguments as in 2.
      \item
        If $\varphi=\psi_1\U \psi_2$ for formulae $\psi_1,\psi_2\in \LL{\{\F,\G,\X,\U,\S\},B}$, we have that $\psi_2$ cannot
        be replaced by 0 (otherwise $\varphi$ would be replaced by 0 and \MSAT\ would return ``unsatisfiable''). So by
        induction it follows that $S,s_0\vDash \psi_2$. Hence, it holds that $S,s_0\vDash \varphi$ as well.
      \item
        If $\varphi=\psi_1\S \psi_2$ for formulae $\psi_1,\psi_2\in \LL{\{\F,\G,\X,\U,\S\},B}$, we can use the same arguments
        as for 5.
      \item
        If $\varphi=\psi(\varphi_1,\dots,\varphi_k)$ for formulae $\psi\in B$ and $\varphi_i \in \LL{\{\F,\G,\X,\U,\S\},B}$,
        for all $i = 1,\dots,k$, let $\varphi_1',\dots,\varphi_k'$ be the replacements of
        $\varphi_1,\dots,\varphi_k$. By induction it follows that $S,s_0\vDash \varphi_i$ if and only if $\varphi_i'\neq 0$
        for any $i\in\{1,\dots,k\}$. Since $\varphi'\neq 0$ and because of the last replacement rule, $S,s_0\vDash \varphi$.\qed
      \end{enumerate}
  \else
    \begin{theorem}\label{theorem:SAT(.;N) in P}
      Let $B$ be a finite subset of $\rm N$ or $\rm M$.
      Then $\tsat{\{\F,\G,\X,\U,\S\},B}$ can be decided in polynomial time.
    \end{theorem}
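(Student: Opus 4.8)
The plan is to treat the two hypotheses $B\subseteq\cN$ and $B\subseteq\cM$ separately, since they rest on rather different structural facts. In both cases the decisive observation is that the restricted connectives force very rigid behaviour: over $\cN$ every propositional $B$-formula depends on at most one variable (so it is a constant, a variable, or a negated variable), while over $\cM$ every propositional $B$-formula describes a monotone function, and in particular is satisfiable iff it is satisfied by the all-one assignment. The temporal operators, as one checks along the way, do not destroy these properties.

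For $B\subseteq\cN$ I would give a recursive polynomial-time procedure that, on input $\varphi$, decides whether $\varphi$ is equivalent to the constant $0$, equivalent to the constant $1$, or to neither; satisfiability is then just ``$\varphi$ is not equivalent to $0$''. The inductive rules are: a variable is non-constant; $\neg\varphi$ is equivalent to $c$ iff $\varphi$ is equivalent to $\overline{c}$; each of $\X\varphi_1,\F\varphi_1,\G\varphi_1$ is equivalent to $c$ iff $\varphi_1$ is (using $\X c\equiv\F c\equiv\G c\equiv c$); and for $\varphi_1\U\varphi_2$ (and symmetrically $\varphi_1\S\varphi_2$) one distinguishes whether $\varphi_2$ is constant, in which case the whole formula collapses to that same constant, and if not, whether $\varphi_1$ is the constant $0$ (the formula simplifies to $\varphi_2$) or the constant $1$ (it simplifies to $\F\varphi_2$, resp.\ to ``$\varphi_2$ held at some point in the past''), and otherwise the formula is non-constant. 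To justify that ``non-constant'' really is the right case, one needs that a temporal formula built only from $\neg$, variables, and temporal operators, with no constant occurring, is neither valid nor unsatisfiable: this follows from Theorem~\ref{theorem:SAT(.;R1|D) trivial} (the case $B\subseteq\cD$), because such a formula is self-dual at the propositional level, so the all-true and the all-false structures assign it opposite truth values. Since there is at most one recursive call per operator occurrence in $\varphi$, the procedure runs in polynomial time.

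For $B\subseteq\cM$ I would use a saturation algorithm that repeatedly rewrites $\varphi$ until no rule applies: replace an unsatisfiable propositional subformula by $0$ (testing satisfiability of a propositional $B$-formula is in \PTIME, since it suffices to evaluate the all-one assignment), replace $\F0,\G0,\X0,\psi\U0,\psi\S0$ by $0$, replace $0\U\psi$ and $0\S\psi$ by $\psi$, and replace $\psi(\varphi_1,\dots,\varphi_k)$ with $\psi\in B$ by $0$ whenever substituting $0$ for those arguments that are literally $0$ and $1$ for the rest yields value $0$. One outputs ``unsatisfiable'' iff the saturated formula is $0$. Each rule strictly decreases the number of operator occurrences, so only linearly many rounds are needed, each requiring linearly much work; and each rule preserves satisfiability, so ``result $=0$'' implies ``$\varphi$ unsatisfiable''. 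The converse is the heart of the matter: if the saturated formula is not $0$, I would show by structural induction that the structure $S$ in which every variable is true in every state satisfies $\varphi$ at $s_0$. The temporal cases use that the critical argument subformula (for instance $\varphi_2$ in $\varphi_1\U\varphi_2$) cannot have been simplified to $0$, hence by induction is satisfied everywhere in $S$; the case $\psi(\varphi_1,\dots,\varphi_k)$ uses monotonicity of $\psi$ together with the fact that the last rule did not fire.

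I expect the main obstacle to be the correctness proofs rather than the design of the algorithms. In the $\cN$ case the delicate step is precisely the appeal to the self-duality result of Theorem~\ref{theorem:SAT(.;R1|D) trivial}, pushed through all five temporal operators, to guarantee that every constant-free temporal $\neg$-formula is simultaneously satisfiable and falsifiable. In the $\cM$ case the subtle point is the induction showing that the all-true model is a witness: one must track carefully that a surviving subformula $\psi_1\U\psi_2$ or $\psi_1\S\psi_2$ has a surviving $\psi_2$, and that a surviving $\psi(\varphi_1,\dots,\varphi_k)$ becomes true once every non-zero argument is raised to $1$, which is exactly where monotonicity of $\psi$ is essential.
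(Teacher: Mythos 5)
Your proposal is correct and follows essentially the same route as the paper: for $B\subseteq\cN$ a recursive constant-propagation procedure whose ``non-constant'' cases are justified by the self-duality argument from the proof of Theorem~\ref{theorem:SAT(.;R1|D) trivial}, and for $B\subseteq\cM$ the same saturation/rewriting algorithm with correctness witnessed by the all-true structure. The subtle points you flag (pushing self-duality through all five temporal operators, and the monotonicity step in the induction for the last rewrite rule) are exactly the ones the paper's proofs handle.
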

  \fi

  Finally, \ifreport we show that \fi satisfiability for formulae that have \X\ as a modal operator and the
  \XOR\ function $\oplus$ as a propositional operator is in \PTIME. This is true because functions described
  by these formulae have a high degree of symmetry.

  \begin{theorem}\label{theorem:SAT(X;L) in P}
    Let $B$ be a finite subset of $\cL.$ Then \tsat{\{\X\},B} can be decided in polynomial time.
  \end{theorem}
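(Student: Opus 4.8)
The plan is to exploit the special structure of linear connectives together with $\X$: over such a language every formula collapses, semantically, to a single affine combination over $\mathrm{GF}(2)$ of ``shifted variables,'' and deciding satisfiability of such an affine form is trivial. Throughout, for a variable $x$ and $d\ge 0$ write $\X^d x$ for the formula $\X\cdots\X x$ ($d$ times); at a state $s_i$ of a structure $S$ its truth value is simply whether $x\in\xi(s_{i+d})$.

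First I would record the two elementary semantic facts that drive the normalization. (i) $\X$ distributes over every propositional connective: for any Boolean $f$ we have $S,s_i\models\X f(\psi_1,\dots,\psi_n)$ iff $S,s_{i+1}\models f(\psi_1,\dots,\psi_n)$ iff $S,s_i\models f(\X\psi_1,\dots,\X\psi_n)$, since the only thing $f$ sees is which of its argument formulae hold. (ii) $\X 0\equiv 0$ and $\X 1\equiv 1$. Since every connective in $B$ is linear, i.e.\ of the form $f(x_1,\dots,x_n)=c_0\oplus\bigoplus_{i\in I}x_i$ with $I\subseteq\{1,\dots,n\}$, repeated use of (i) pushes all occurrences of $\X$ to the leaves, and collapsing the iterated XOR (even multiplicities cancel) yields, by structural induction on a temporal $B$-formula $\varphi$ over $\{\X\}$, a bit $c_\varphi\in\{0,1\}$ and a set $T_\varphi$ of pairs $(x,d)$ with $x\in V_\varphi$ and $0\le d\le D$ — where $D$ is the $\X$-depth of $\varphi$ — such that for every structure $S$ and every state $s_i$,
\[
  S,s_i\models\varphi\quad\Longleftrightarrow\quad c_\varphi\ \oplus\!\!\bigoplus_{(x,d)\in T_\varphi}\!\!\bigl[\,x\in\xi(s_{i+d})\,\bigr]\ =\ 1 .
\]
Second, I would observe that for pairwise distinct pairs $(x,d)$ the bits $[\,x\in\xi(s_{i+d})\,]$ are mutually independent and can be given any prescribed values by a suitable structure (distinct variables in one state, or one variable in distinct states, are assigned freely), and that since $\{\X\}$ has no past operator it is harmless to evaluate at $s_0$, where all the states $s_d$ exist. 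Hence the right-hand side, viewed as an affine form in these free bits, is satisfiable exactly when it is not the constant $0$; so $\varphi$ is satisfiable iff $T_\varphi\neq\emptyset$ or $c_\varphi=1$, and unsatisfiable iff $T_\varphi=\emptyset$ and $c_\varphi=0$.

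Third, I would make the induction algorithmic, taking care \emph{not} to write the affine normal form out explicitly, since it can be exponentially long. For each subformula $\psi$ of $\varphi$ maintain a $\mathrm{GF}(2)$-vector $v_\psi$ of length $1+|V_\varphi|(D+1)$ holding the constant bit and, for each pair $(x,d)$, the parity of the coefficient of $\X^d x$ in the affine form of $\psi$. These are computed bottom-up: $v_x$ has only its $(x,0)$-entry set; for $\psi=f(\psi_1,\dots,\psi_n)$ with $f=c_0\oplus\bigoplus_{i\in I}x_i$ put $v_\psi=\bigoplus_{i\in I}v_{\psi_i}$ and add $c_0$ to the constant bit; for $\psi=\X\psi_1$ shift every coefficient $d\mapsto d{+}1$ (which stays within range, since the $\X$-depth of $\psi_1$ is at most $D{-}1$) and keep the constant bit unchanged. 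As $B$ is fixed, each vector has size $O(|\varphi|^2)$ and there are $O(|\varphi|)$ of them, so the whole computation runs in polynomial time; finally $\varphi$ is satisfiable iff $v_\varphi$ is not the zero vector. The only genuine care needed lies in the two items above — the commutation of $\X$ with \emph{arbitrary} linear connectives, and the independence of the shifted-variable bits — and in the discipline of keeping the normal form implicit via the dynamic program rather than expanding it.
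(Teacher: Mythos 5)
Your proof is correct, but it takes a genuinely different route from the paper's. You establish a full semantic normal form: every $\cL$-formula over $\{\X\}$ is equivalent to an affine $\mathrm{GF}(2)$-combination $c\oplus\bigoplus_{(x,d)}\X^d x$ of time-shifted variables, you observe that the corresponding bits are mutually independent (so the form is satisfiable iff it is not identically $0$), and you compute the coefficient vector by a bottom-up dynamic program. The paper instead peels off one layer of $\X$ at a time: it reduces $B$ to $\{\oplus,1\}$, writes $\varphi=\X\psi_1\oplus\dots\oplus\X\psi_k\oplus\psi$ with $\psi$ propositional, shows that $\varphi$ is satisfiable outright when $\psi$ is non-constant (by changing the assignment at $s_0$ so as to toggle $\psi$ without affecting any $\X\psi_i$), and otherwise recurses on $\psi_1\oplus\dots\oplus\psi_k$, tracking satisfiability and tautology simultaneously. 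Your normal form makes the structure of the problem completely explicit and yields a single test ($v_\varphi\neq 0$) in place of the paper's mutual recursion on satisfiability and validity; the paper's recursion, in turn, avoids having to argue about the independence of shifted-variable bits. One minor inaccuracy in your write-up: the expanded normal form cannot actually be exponentially long, since formulas are trees and each variable-leaf of $\varphi$ contributes at most one term, so the explicit form has size $O(|\varphi|^2)$; your dynamic program is nonetheless a perfectly sound way to compute it.
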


  \ifreport
    \proof
      First observe that any function from $\cL$ is of the form $f(x_1,\dots,x_n)=x_{i_1}\oplus\dots\oplus x_{i_k}\oplus c,$
      where the $x_{i_j}$ are pairwise different variables from the set $\set{x_1,\dots,x_n},$ and $c$ is either $0$ or $1.$
      Therefore, it is obvious that temporal $B$-formulae can be rewritten using only the connectors $\oplus$ and the constant $1$
      (the $0$ can be omitted in the representation above). Hence, we can assume that the set $B$ contains only
      the functions $\oplus$ and $1.$
      Now observe that any formula $\varphi$ from $\mathtext{L}(\{\X\},\{\oplus,1\})$ can be written as
      $$\varphi=\X\psi_1\oplus\dots\oplus\X\psi_k\oplus\psi,$$
      where $\psi$ is a propositional formula. This representation can be computed in polynomial
      time, and we can determine in polynomial time whether $\psi$ is a constant function.

      If $\psi$ is not a constant function, then $\varphi$ is satisfiable: Let
      $S=(s,V_{\varphi},\xi)$ be an arbitrary structure. If $\varphi$ is not satisfied at $s_0$, then
      we can ``switch over'' the current truth value of $\psi$, thus achieving that one more
      (or one less) of the arguments of the outermost \XOR\ function becomes true.
      For this purpose, we change the assignment of the propositional variables at $s_0$ in such
      a way that the new assignment satisfies $\psi$ if and only if the old assignment does not. Since this change
      does not affect the validity of the $\X\psi_i$ parts, $\varphi$ holds at $s_0$ with the new assignment.

      Now, if $\psi$ is constant, this trick does not work. Instead, let
      $$\varphi'=\psi_1\oplus\dots\oplus\psi_k.$$
      Observe that in this case $\varphi$ is satisfiable if and only if
      $\psi$ is the constant $0$ and $\varphi'$ is satisfiable, or if $\psi$ is the constant $1$ and $\varphi'$ is no
      tautology; and that $\varphi$ is a tautology if and only if $\psi$ is the constant $0$ and $\varphi'$ is a
      tautology, or $\psi$ is the constant $1$ and $\varphi'$ is not satisfiable.
      Thus we have an iterative algorithm deciding \tsat{\{\X\},\{\oplus,1\}}, since for a propositional $B$-formula, these
      questions can be efficiently decided.\qed
  \fi

\section{Conclusion}

  We have almost completely classified the computational complexity of satisfiability
  for LTL with respect to the sets of propositional and temporal operators
  permitted, see Table \ref{tab:results_concl}. The only case left open is the one in which only propositional
  operators constructed from the binary \XOR\ function (and, perhaps, constants)
  are allowed. This case has already turned out to be difficult to
  handle---and hence was left open---in~\cite{bhss06} for modal satisfiability
  under \textit{restricted} frames classes.
  The difficulty here and in~\cite{bhss06} is reflexivity, \ie, the property that
  the formula $\F\varphi$ is satisfied at some state if $\varphi$ is satisfied
  at \textit{the same} state. This does not allow for a separate treatment of
  the propositional part (without temporal operators) and the remainder of a
  given formula.
  \begin{table}
    \centering
    \begin{small}
      \begin{tabular}{l|cc}
        \hspace*{\fill} temporal operators               & $\{\F\}$, $\{\G\}$,   & any other         \\[2pt]
        function class $B$ (propositional operators)     & $\{\F,\G\}$, $\{\X\}$ & combination       \\
        \hline
        \Stab $B \subseteq \cR_1$ or $B \subseteq \cD$   & trivial               & trivial           \\
        \stab $B \subseteq \cM$ or $B \subseteq \cN$     & in \PTIME             & in \PTIME         \\
        \stab $\cL_0$, $\cL$                             & ?                     & ?                 \\
        \stab else (\ie, $B \supseteq \cS_1$)            & \NP-c.\               & \PSPACE-c.\
      \end{tabular}

    \end{small}
    \par\bigskip
    \caption{%
      Complexity results for satisfiability. The entries ``trivial'' denote cases in which a given formula
      is always satisfiable. The abbreviation ``c.'' stands for ``complete.'' Question marks stand for open questions.%
    }
    \label{tab:results_concl}
  \end{table}

  Our results bear an interesting resemblance to the classifications obtained
  in \cite{lew79} and in \cite{bhss06}. In all of these cases (except for one of the several
  classifications obtained in the latter), it turns out that sets of Boolean functions $B$
  which generate a clone above $\cS_1$ give rise to computationally hard problems,
  while other cases seem to be solvable in polynomial time. Therefore, in a precise sense, it is
  the function represented by the formula $x\wedge\overline{y}$ which turns problems in
  this context computationally intractable. 
%
  These hardness results seem to indicate that $x\wedge\overline{y}$ and other functions which
  generate clones above $\cS_1$ have properties that make computational problems hard, and this notion
  of hardness is to a large extent independent of the actual problem considered.

  In \cite{bms+07}, we have separated tractable and intractable cases of the model checking problem for LTL with restrictions 
  to the propositional operators. Without such restrictions, this problem has the same complexity as satisfiability~\cite{sicl85}.

  The results from this paper leave two open questions. Besides the unsolved \XOR\ case, it would be
  interesting to further classify the polynomial-time solvable cases.
  Further work could also examine related specification languages, such as CTL, $\text{CTL}^\ast$,
  or hybrid temporal languages.

\section*{Acknowledgments}
We thank Martin Mundhenk and the anonymous referees for helpful comments and suggestions.

  \bibliographystyle{alpha}
  \bibliography{\jobname}
\end{document}